\documentclass[reprint,aps,prb]{revtex4-2} 
\usepackage{amsmath,amssymb,bm,bbm,mathtools}
\usepackage{amsthm}
\usepackage{graphicx}
\usepackage[colorlinks=true, allcolors=blue]{hyperref}
\usepackage{booktabs}
\usepackage{array}
\usepackage{physics}
\usepackage{dcolumn}
\newtheorem{theorem}{Theorem}
\newtheorem{proposition}{Proposition}
\newcommand{\R}{\mathcal{R}}
\newcommand{\Lx}{L_x}
\newcommand{\Ly}{L_y}
\newcommand{\aR}{\alpha_R}
\newcommand{\aI}{\alpha_I}
\newcommand{\epnu}{\epsilon_\nu}

\begin{document}
\title{Non-Hermitian skin effect induced by spatial noncommutativity}
\author{Zheng Wei}
\affiliation{Center for Advanced Quantum Studies, School of Physics and Astronomy, Beijing Normal University, Beijing 100875, China}
\affiliation{Key Laboratory of Multiscale Spin Physics (Ministry of Education), Beijing Normal University, Beijing 100875, China}

\author{Su-Peng Kou}
\email{spkou@bnu.edu.cn}
\affiliation{Center for Advanced Quantum Studies, School of Physics and Astronomy, Beijing Normal University, Beijing 100875, China}
\affiliation{Key Laboratory of Multiscale Spin Physics (Ministry of Education), Beijing Normal University, Beijing 100875, China}

\begin{abstract}
In all known schemes for the non-Hermitian skin effect, the non-Hermitian ingredient that drives the skin localization, whether asymmetric hopping or gain and loss, is invariably introduced by hand as an independent model parameter along the skin direction. Here we show that when two spatial coordinates do not commute, the skin effect can break free of this paradigm: a gain-loss potential applied along one coordinate automatically generates non-reciprocity along the other through the coordinate noncommutativity, driving all eigenstates to pile up exponentially at a boundary. We term this phenomenon the noncommutative skin effect. The inverse skin length is proportional to the noncommutativity parameter and is given by an analytic formula, exact in the thermodynamic limit and verified by exact diagonalization of lattice models; the reflection symmetry of the imaginary potential furnishes an exact criterion for the presence or absence of the effect, valid rigorously for finite-size systems. For a sinusoidal imaginary potential, the skin direction of all eigenstates flips collectively at parameter points fixed purely by geometry. Because the flip point is independent of the potential strength, the reversal constitutes a zero-crossing measurement scheme intrinsically robust against systematic errors, from which the noncommutativity parameter can be extracted directly. The qualitative transition of the eigenstates from uniform to exponentially localized renders the effect a nonperturbative probe of spatial noncommutativity, and the Peierls-phase structure of its lattice model is in principle accessible to cold-atom synthetic dimensions, photonic resonators, and topolectrical circuits.
\end{abstract}

\maketitle

\section{Introduction}\label{sec:intro}

The non-Hermitian skin effect (NHSE), the exponential boundary accumulation of an extensive number of eigenstates under open boundary conditions (OBC) together with the accompanying spectral collapse~\cite{Yao2018,Yokomizo2019}, has overturned the Bloch bulk-boundary correspondence~\cite{Yao2018,zhang_CorrespondenceWindingNumbers_2020,borgnia_NonHermitianBoundaryModes_2020,yang_NonHermitianBulkBoundaryCorrespondence_2020,kawabata_SymmetryTopologyNonHermitian_2019}, given rise to the non-Bloch band theory~\cite{Yokomizo2019,wang_AmoebaFormulationNonBloch_2024}, and been observed on photonic, electrical-circuit, acoustic, cold-atom, and quantum-walk platforms~\cite{weidemann_TopologicalFunnelingLight_2020,helbig_GeneralizedBulkBoundary_2020,zhang_AcousticNonHermitianSkin_2021,Liang2022,Zhao2025,xiao_NonHermitianBulkBoundary_2020} (for reviews, see Refs.~\cite{ashida_NonHermitianPhysics_2020,Ding2022review,Okuma2023review}). Its variants have grown remarkably diverse, including the critical skin effect~\cite{li_CriticalNonHermitianSkin_2020}, scale-free localization~\cite{li_ImpurityInducedScalefree_2021,li_ScalefreeLocalizationPT_2023}, higher-order skin effects~\cite{zhang_HigherOrderNHSE_2021}, geometry-dependent skin effects~\cite{zhang_UniversalNonHermitianSkin_2022,Wang2023GDSE,Zhou2023GDSE}, boundary-insensitive skin effects~\cite{Wei2025}, and skin-direction reversal~\cite{Li2022reversal,Yuce2024bipolar,Yang2026transverseSwitch}; the driving mechanisms encompass asymmetric hopping~\cite{Hatano1996}, gain and loss~\cite{yi_NonHermitianSkinModes_2020,jiang_TunableNonHermitianSkin_2024}, dissipative coupling~\cite{Huang2024dissipative}, Floquet driving~\cite{Lin2024Floquet,Sun2024Floquet}, geometry~\cite{Wang2023GDSE}, and spatially inhomogeneous imaginary potentials~\cite{Wei2025}. Yet one feature has remained common to all of them: the non-Hermitian ingredient that drives the NHSE, be it asymmetric hopping, gain and loss, or imaginary couplings, always enters as an \emph{independent model input along the skin direction}; even in schemes that dispense with asymmetric hopping, non-Hermitian parameters still pervade the skin direction. To date, no scheme exists in which the non-reciprocity along the skin direction is not an independent input but is instead dictated by the fundamental structure of space itself (see Table~\ref{tab:NCSE-vs-HN} for an item-by-item comparison with existing mechanisms).

Coordinate noncommutativity, $[\hat{x},\hat{y}]=i\theta$, offers a natural way out. This algebraic structure arises in a variety of fundamental settings, including the guiding centers of Landau levels~\cite{Bellissard1994,Ezawa2013}, D-branes in string theory~\cite{Seiberg1999}, and minimal-length scenarios in quantum gravity~\cite{Connes1994,Douglas2001}. In recent years, synthetic-dimension platforms have independently accomplished two things: the simulation of algebraic structures related to coordinate noncommutativity (non-Abelian gauge fields~\cite{Cheng2025} and Landau-level physics~\cite{Zhang2024CLM}) and the observation of the NHSE~\cite{Liang2022,Zhao2025,Wang2021winding,Zheng2024}. The theoretical connection between the two, however, has long remained at the level of spectral corrections and $PT$ symmetry~\cite{Giri2009,Santos2019}, lacking a deeper point of contact. Within noncommutative quantum mechanics, existing proposals for detecting $\theta$ rely on perturbative energy corrections~\cite{Gamboa2001}, whose signal vanishes continuously as $\theta\to 0$ without any macroscopic qualitative jump. By contrast, the spectral and wave-function reconstruction underlying the NHSE is intrinsically nonperturbative~\cite{Budich2020,Guo2021,Okuma2020}: the change of the density profile from uniform to exponentially localized cannot be undone by any continuous deformation. A natural question therefore arises: can the non-reciprocity along the skin direction be generated with coordinate noncommutativity as the sole transmission mechanism, rather than being introduced as an independent model parameter? If the answer is affirmative, the NHSE becomes a macroscopic, qualitative probe of spatial noncommutativity, with a sensitivity fundamentally beyond that of perturbative corrections.

In this work we show that the answer is affirmative. The mechanism can be summarized as a three-step chain (Fig.~\ref{fig:mechanism}). First, representation theory locks in a shear structure: the Bopp shift $\hat{y}=\hat{Y}+\theta\hat{P}_X$~\cite{Douglas2001,Szabo2003} is the unique faithful representation of the noncommutative algebra~\eqref{eq:commutation} (Stone--von Neumann theorem~\cite{Gouba2009,Scholtz2009}; Theorem~\ref{thm:no-go}), so an imaginary potential along $\hat{y}$ unavoidably carries the momentum of the $X$ direction: a mode with momentum $k$ experiences the imaginary-potential window translated by $\theta k$ along $y$ [Fig.~\ref{fig:mechanism}(a)]. Second, the translation produces dispersion non-reciprocity: under open boundaries along $Y$, right-moving ($k>0$) and left-moving ($k<0$) modes sample windows with different average gain and loss, so the imaginary part of the spectrum becomes an asymmetric function of $k$ [Fig.~\ref{fig:mechanism}(b); Sec.~\ref{sec:model}]. Third, the dispersion non-reciprocity converts into a skin effect: once open boundaries are imposed along $X$ as well, the momentum-dependent gain imbalance can no longer be carried by Bloch waves, and the same mechanism as in the Hatano--Nelson model~\cite{Hatano1996,Yao2018} drives all eigenstates to accumulate collectively, with exponential envelopes, on the gain-favored side (Sec.~\ref{sec:quantitative}). Throughout this chain the only non-Hermitian input is the imaginary potential along $\hat{y}$; the non-reciprocity along the skin direction is uniquely locked by the noncommutative algebra together with that potential and cannot be tuned independently, a fundamental departure from conventional schemes, where the non-Hermitian parameters are free inputs. We name this effect the \textbf{noncommutative skin effect} (NCSE); its mechanism and the main results of this work are summarized in Fig.~\ref{fig:mechanism}.

The properties of the NCSE are characterized by two complementary theorems. Theorem~\ref{thm:NHSE} [Sec.~\ref{sec:quantitative}; Fig.~\ref{fig:mechanism}(b)] provides a quantitative formula for the inverse skin length, $\kappa_\nu\propto\theta$ (exact in the thermodynamic limit), so that the strength and direction of the skin effect directly encode the noncommutative structure of the underlying space. Theorem~\ref{thm:symmetry} [Sec.~\ref{sec:symmetry}; Fig.~\ref{fig:mechanism}(c)] establishes the reflection symmetry of the imaginary potential as an exact criterion for the existence of the NCSE, valid rigorously for finite systems without any approximation. The symmetry criterion further predicts a directly measurable phenomenon (Sec.~\ref{sec:peierls}): for a sinusoidal imaginary potential, the system size $\Ly$ along $Y$ controls the skin direction along $X$, and all modes reverse collectively at the flip points $\theta_c=\Ly/[(2k+1)\pi]$, which are fixed solely by the geometric parameter $\omega=\Ly/\ell$; the noncommutativity parameter $\theta$ can thus be extracted directly, without any knowledge of the potential strength $g$. This strategy, taking the zero crossing of a direction reversal rather than the absolute strength of a signal as the observable, is intrinsically robust against systematic errors, much as locating the dark fringes of an interferometer is far more precise than measuring absolute intensities. All of the analytic predictions above have been verified numerically (Secs.~\ref{sec:numerical-verification}, \ref{sec:symmetry-robustness}, and \ref{sec:peierls}; Figs.~\ref{fig:quantitative-verification}--\ref{fig:periodic-verification}). Since the direction-flip prediction relies only on symmetry rather than on implementation details, it can in principle be tested on cold-atom synthetic dimensions~\cite{Liang2022,Zhao2025}, photonic resonators~\cite{Wang2021winding,Zheng2024}, and topolectrical circuits~\cite{helbig_GeneralizedBulkBoundary_2020} (Sec.~\ref{sec:discussion}), providing a sharp and robust target for experimental verification.

\begin{figure*}[t]
\centering
{\includegraphics[width=1\linewidth]{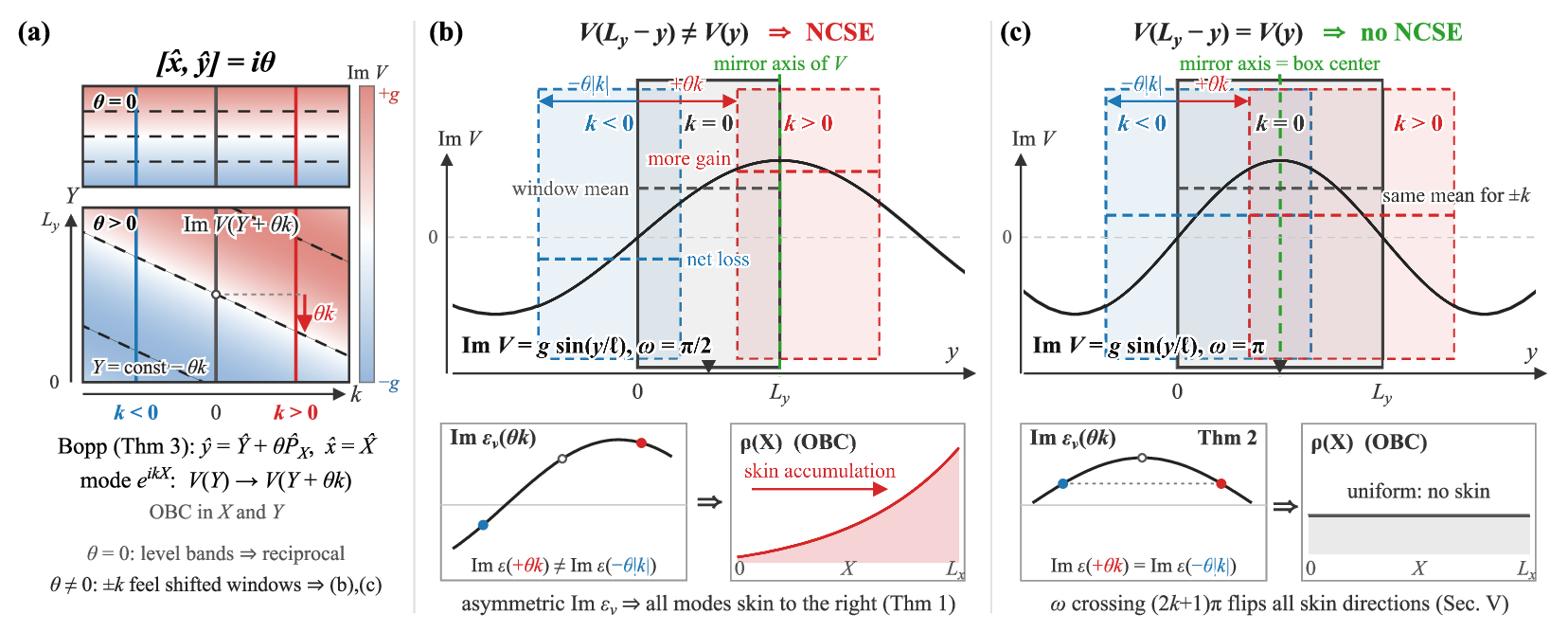}}
\caption{Sliding-window mechanism of the NCSE and roadmap of this work. (a)~Shear picture: both canvases show the $(k,Y)$ plane (horizontal axis: wave vector $k$ along $X$; vertical axis: transverse coordinate $Y\in[0,\Ly]$), and the color at point $(k,Y)$ encodes the imaginary potential $\mathrm{Im}\,V(Y+\theta k)$ experienced at $Y$ by the mode with momentum $k$ (red: gain; blue: loss; see the color bar on the right). Top panel, $\theta=0$: the contour bands are horizontal (black dashed lines mark the contours), and the vertical cuts at any fixed $k$ (blue/gray/red vertical lines) yield identical transverse potential profiles; the dispersion is reciprocal. Bottom panel, $\theta>0$: the Bopp shift $\hat{y}=\hat{Y}+\theta\hat{P}_X$ shears momentum into coordinate, and the contours become the family of straight lines $Y=\mathrm{const}-\theta k$ with slope $-\theta$; the vertical cut at fixed $k$ is the potential profile experienced by the mode $e^{ikX}$, translated farther along $Y$ for larger momentum (red arrow: the displacement $\theta k$ of one contour band from the $k=0$ cut to a $k>0$ cut). The $\pm k$ cuts are precisely the windows sampled by the sliding boxes in (b) and (c). Representation uniqueness (Theorem~\ref{thm:no-go}) guarantees that the shear is an algebraic necessity rather than a modeling choice. (b)~Mirror-asymmetric placement [sinusoidal potential $\mathrm{Im}\,V=g\sin(y/\ell)$ with $\omega\equiv\Ly/\ell=\pi/2$] $\Rightarrow$ NCSE: $\epnu(s)$ is the $\nu$th eigenvalue of the Dirichlet box $[s,\,s+\Ly]$ ($s=\theta k$) sliding across the fixed potential landscape; the black triangles mark the box center, and the green dashed line is the mirror axis of the potential (offset from the box center). The $k>0$ (red) and $k<0$ (blue) windows sample different window-averaged gains (dashed horizontal lines, computed from the actual values), so $\mathrm{Im}\,\epnu(\theta k)$ is an asymmetric function of $k$ (lower-left inset, actual curve), with a nonzero slope at $s=0$; Theorem~\ref{thm:NHSE} then yields the inverse skin length $\kappa_\nu=(\theta/2)\,\partial_s\mathrm{Im}\,\epnu|_{s=0}\neq 0$, and under OBC the $X$ density $\rho(X)$ of all eigenstates piles up on the $X=\Lx$ side with this common inverse length (lower-right inset). (c)~Mirror-symmetric placement (same sinusoidal potential, $\omega=\pi$) $\Rightarrow$ no NCSE: the mirror axis passes exactly through the box center (green dashed line coincides with the triangles), the $\pm k$ windows are mirror images of each other with equal window-averaged gain, $\epnu(s)$ is an even function (Theorem~\ref{thm:symmetry}), and $\rho(X)$ is uniform. Tuning $\omega$ so that the mirror axis sweeps through the box center [$\omega$ crossing $(2k+1)\pi$] flips the skin direction of all modes collectively (Sec.~\ref{sec:peierls}).}
\label{fig:mechanism}
\end{figure*}

\section{Model and mechanism}\label{sec:model}
We begin with quantum mechanics on a two-dimensional noncommutative space, where the coordinate operators $\hat{x}$, $\hat{y}$ and their conjugate momenta $\hat{p}_x$, $\hat{p}_y$ obey the commutation relations
\begin{equation}\label{eq:commutation}
[\hat{x},\hat{y}]=i\theta,\quad [\hat{x},\hat{p}_x]=[\hat{y},\hat{p}_y]=i,
\end{equation}
with $[\hat{p}_x,\hat{p}_y]=[\hat{x},\hat{p}_y]=[\hat{y},\hat{p}_x]=0$, where $\theta$ is the noncommutativity parameter; $\theta=0$ recovers the standard commutative space. The full algebraic structure of Eq.~\eqref{eq:commutation} admits no faithful representation on a single $L^2(\mathbb{R})$ (see Appendix~\ref{app:dim-obstruction} for the proof); the minimal faithful representation requires $L^2(\mathbb{R}^2)$. To this end we introduce a set of standard canonical variables $(\hat{X},\hat{Y},\hat{P}_X,\hat{P}_Y)$ with $[\hat{X},\hat{P}_X]=[\hat{Y},\hat{P}_Y]=i$ and all other commutators vanishing, and define the asymmetric Bopp shift~\cite{Douglas2001,Szabo2003}:
\begin{equation}\label{eq:Bopp}
\hat{x}=\hat{X},\quad \hat{y}=\hat{Y}+\theta\hat{P}_X,\quad \hat{p}_x=\hat{P}_X,\quad \hat{p}_y=\hat{P}_Y.
\end{equation}
The advantage of the asymmetric form lies in $\hat{x}=\hat{X}$: the wave-function profile along $X$ directly reflects the distribution along the physical coordinate $\hat{x}$, which renders the analysis of the skin effect most transparent (the symmetric Bopp shift and its equivalence are discussed in Appendix~\ref{app:Bopp-star}). We consider a free particle subject to an arbitrary imaginary potential $V(\hat{y})=iU(\hat{y})$ along $\hat{y}$, with $U$ a real-valued function (setting $\hbar=2m=1$):
\begin{equation}\label{eq:H-original}
\hat{H}=\hat{p}_x^2+\hat{p}_y^2+V(\hat{y}).
\end{equation}
Under the Bopp shift, the Hamiltonian is mapped onto commutative space:
\begin{equation}\label{eq:Bopp-H}
\hat{H}_{Bop}=\hat{P}_X^2+\hat{P}_Y^2+V\!\left(\hat{Y}+\theta\hat{P}_X\right).
\end{equation}
In the coordinate representation ($\hat{P}_X=-i\partial_X$, $\hat{P}_Y=-i\partial_Y$), the eigenvalue problem $\hat{H}_{Bop}\Psi=E\Psi$ reads
\begin{equation}\label{eq:Bopp-H-PDE}
\left[-\left(\partial_X^2+\partial_Y^2\right)+V\!\left(Y-i\theta\partial_X\right)\right]\Psi=E\Psi.
\end{equation}
For $V(\hat{y})$, since $[\hat{Y},\hat{P}_X]=0$ (they act on different degrees of freedom), the potential term in Eq.~\eqref{eq:Bopp-H} can be Taylor expanded free of any operator-ordering ambiguity:
\begin{equation}\label{eq:Bopp-expansion}
V(\hat{Y}+\theta\hat{P}_X)=V(\hat{Y})+V'(\hat{Y})\theta\hat{P}_X+\tfrac{V''(\hat{Y})(\theta\hat{P}_X)^2}{2}+\cdots,
\end{equation}
where the zeroth-order term $V(\hat{Y})$ is the potential along $Y$, the first-order term $U'(Y)\theta\partial_X$ is, in the coordinate representation, an imaginary gauge field~\cite{Hatano1996} along $X$ whose strength is proportional to $\theta$ and to the local potential gradient, and the even orders produce only symmetric corrections (see the lattice analysis in Appendix~\ref{app:lattice-general}). The Bopp shift therefore forces the imaginary potential along $\hat{y}$ to leak into the $X$ dynamics, and the leading term $U'(Y)\theta\partial_X$ possesses precisely the imaginary-gauge-field structure required to drive the NCSE. Moreover, since Eq.~\eqref{eq:Bopp-H} does not contain $\hat{X}$ explicitly, $\hat{P}_X$ is conserved. Taking eigenstates of $\hat{P}_X$, $\Psi(X,Y)=e^{ikX}\phi_\nu(Y;s)$ with $s\equiv\theta k$, where $k$ is the good quantum number of $\hat{P}_X$ and $\nu=1,2,\ldots$ labels the bands, the potential becomes $V(Y+s)$, i.e., translated by $s$ along $Y$, a statement that holds exactly for arbitrary functional forms of $V$. We then have
\begin{equation}\label{eq:h-def}
h(s)\,\phi_\nu(Y;s)=\epnu(s)\,\phi_\nu(Y;s),
\end{equation}
where $h(s)\equiv -\frac{d^2}{dY^2}+V(Y+s)$ is the effective Hamiltonian along $Y$, subject to OBC on $Y\in[0,\Ly]$ [$\phi_\nu(0;s)=\phi_\nu(\Ly;s)=0$]. The total energy is $E_\nu(k)=k^2+\epnu(\theta k)$, with the potential $V(Y+s)$ depending on the wave vector $k$ through $s$. Here $\epnu(s)$ is nothing but the $\nu$th eigenvalue of the Dirichlet box $[s,\,s+\Ly]$ in a fixed potential landscape: a particle confined to a fixed box samples a sliding window of the global landscape, and the transverse boxes of right- and left-moving modes are translated in opposite directions by $\theta|k|$ along $y$ [Fig.~\ref{fig:mechanism}(b)]. Whenever $\epnu(s)\neq\epnu(-s)$, one has $E_\nu(k)\neq E_\nu(-k)$: the dispersion is non-reciprocal. This window picture can be visualized globally in the $(k,Y)$ plane [Fig.~\ref{fig:mechanism}(a)]: $V(Y+\theta k)$ depends on the two variables only through the combination $Y+\theta k$, so its contours form a family of straight lines with slope $-\theta$; the noncommutativity shears momentum into coordinate. At $\theta=0$ the contour bands are horizontal and all modes experience the same transverse potential; at $\theta\neq 0$ the vertical cut at fixed $k$ gives the transverse potential profile of that mode, and different cuts are translates of one another. The shear itself is merely a kinematic structure; converting it into spectral non-reciprocity requires the OBC along $Y$ to break the unitary equivalence between the $\pm s$ windows, which is the starting point of the quantitative theory developed in the next section (Sec.~\ref{sec:perturbative}; Appendix~\ref{app:BC-necessity}).

\subsection{Exactly solvable case: linear imaginary potential}\label{sec:linear-example}
We take the linear imaginary potential as an example to show how an imaginary potential on a noncommutative space naturally gives rise to the NCSE. Substituting $V=ig\hat{y}$ into Eq.~\eqref{eq:h-def}, the effective Hamiltonian along $Y$ is $h(s)=-d^2/dY^2+ig(Y+s)$. The constant piece $igs$ does not modify $\phi_\nu$ but merely shifts the eigenvalue, $\epnu(s)=\epnu(0)+igs$, where $\epnu(0)$ is the $\nu$th eigenvalue of $h(0)$; the total energy is therefore $E_\nu(k)=k^2+\epnu(0)+ig\theta k$.

Since both $\phi_\nu(Y;s=0)$ and $\epnu(0)$ are independent of $k$, the $X$ and $Y$ directions decouple completely and the total wave function separates as $\Psi(X,Y)=\psi(X)\cdot\phi_\nu(Y)$. In the coordinate representation ($k\to -i\partial_X$), $\psi(X)$ obeys the $X$-direction eigenvalue equation $(-\partial_X^2+g\theta\,\partial_X)\psi=E_X\psi$, where the imaginary gauge field $g\theta\partial_X$ originates purely from the noncommutativity.

Under periodic boundary conditions (PBC) on $X\in[0,\Lx]$, $\psi(0)=\psi(\Lx)$, the Bloch solutions $\psi_k(X)=e^{ikX}$ are eigenstates of $\hat{P}_X$ with $E_X^{\mathrm{PBC}}(k)=k^2+ig\theta k$, whose imaginary part $g\theta k$ is an odd function of $k$ (non-reciprocal dispersion): right- and left-moving modes experience different growth (decay) rates, and the PBC spectrum traces out a parabola in the complex plane. Under OBC, $\psi(0)=\psi(\Lx)=0$, translation invariance is broken and $\hat{P}_X$ is no longer conserved, so the OBC eigenstates are not eigenstates of $\hat{P}_X$. Writing $\psi(X)=e^{\kappa X}\lambda(X)$ and inserting it into the $X$-direction eigenvalue equation, the coefficient of $\partial_X\lambda$ is $-2\kappa+g\theta$; setting it to zero yields $\kappa=g\theta/2$, and the remaining equation is a standard particle-in-a-box problem, $\lambda''= (g^2\theta^2/4-E_X)\lambda$. The OBC eigenstates are
\begin{equation}\label{eq:linear-OBC-state}
\psi_n(X)\propto e^{g\theta X/2}\sin\frac{n\pi X}{\Lx},\quad n=1,2,\ldots
\end{equation}
with the corresponding OBC energies $E_{X,n}^{\mathrm{OBC}}=(n\pi/\Lx)^2+g^2\theta^2/4$ being purely real: the OBC spectrum collapses onto the real axis while the PBC spectrum covers a parabola in the complex plane, the spectral fingerprint of the NCSE. All states share one and the same exponential envelope, with inverse skin length $\kappa=g\theta/2$ (we adopt the convention that $\kappa>0$ corresponds to localization at $X=\Lx$ and $\kappa<0$ to localization at $X=0$). The $Y$-direction Hamiltonian $H_Y=-\partial_Y^2+igY$ exhibits a non-Hermitian localization effect of its own~\cite{Wei2025}, but here we focus on the NCSE along $X$, which is strictly absent in the standard commutative space ($\theta=0$). Exact lattice diagonalization confirms these analytic results: Fig.~\ref{fig:quantitative-verification}(a) shows the eigenstate-averaged density $\bar\rho(m)$ along $X$ for the linear imaginary potential on a semilogarithmic scale; the numerical data fall precisely on the analytic envelope $\propto e^{g\theta m}$, with the inverse skin length $\kappa=g\theta/2$ independent of the band index $\nu$.

\section{Quantitative theory of the NCSE}\label{sec:quantitative}
The analysis of the linear potential reveals the basic mechanism of the NCSE, yet that case is special in ways that do not generalize: $V''=0$ truncates the Bopp expansion exactly at first order, the two directions decouple completely, and $\kappa=g\theta/2$ holds exactly for arbitrary $g$. For a general imaginary potential, the imaginary-gauge-field strength $U'(Y)$ varies with position and the two directions no longer separate. In this section we develop a quantitative theory of the NCSE for arbitrary imaginary potentials.

\subsection{Non-reciprocity coefficient}\label{sec:perturbative}

For a general imaginary potential $V(\hat{y})=iU(\hat{y})$, the non-reciprocal part of the dispersion $E_\nu(k)=k^2+\epnu(s)$, namely $f_\nu(s)\equiv \epnu(s)-\epnu(-s)$, is an odd function of $s$ whose leading behavior $f_\nu(s)\approx 2\epnu'(0)s$ is controlled by the parametric derivative $\epnu'(0)$; the latter can be computed exactly via the non-Hermitian Hellmann--Feynman theorem~\cite{Moiseyev2011,Hajong2023} (full derivation in Appendix~\ref{app:alpha-derivation}). The inequality $\epnu(s)\neq\epnu(-s)$ means that right-moving ($k>0$) and left-moving ($k<0$) modes experience different growth or decay rates: their transverse potential windows are shifted by $\pm\theta|k|$ and sample different window-averaged gains [Fig.~\ref{fig:mechanism}(b)]. This also explains why the $Y$ direction in Eq.~\eqref{eq:h-def} must be a finite interval with open boundaries: in the absence of boundaries, or under PBC, the translation operator $T_s:\phi(Y)\mapsto\phi(Y+s)$ is always unitary, $h(s)$ is unitarily equivalent to $h(0)$, and the spectrum is independent of $s$; only OBC break this unitary equivalence and make $\epnu(s)\neq\epnu(-s)$ possible (a rigorous proof and a systematic classification of boundary conditions are given in Appendix~\ref{app:BC-necessity}). Converting the dispersion non-reciprocity into one-sided accumulation of the eigenstates, in turn, requires OBC along $X$~\cite{Yao2018} (Sec.~\ref{sec:adiabatic}); the two sets of OBC thus play distinct and indispensable roles in producing the NCSE.
\begin{proposition}[Non-reciprocity coefficient]\label{prop:alpha}
Let $\phi_\nu^R$ be the $\nu$th right eigenstate of $h(0)$. For a purely imaginary potential $V(\hat{y})=iU(\hat{y})$ ($U$ real valued),
\begin{equation}\label{eq:alpha}
\epnu'(0)=i\alpha_\nu,\; \alpha_\nu\equiv\frac{\displaystyle\int_0^{\Ly}[\phi_\nu^R(Y)]^2\,U'(Y)\,dY}{\displaystyle\int_0^{\Ly}[\phi_\nu^R(Y)]^2\,dY}\,.
\end{equation}
In general $\alpha_\nu\in\mathbb{C}$, since $\phi_\nu^R$ is a complex function and $[\phi_\nu^R]^2\neq|\phi_\nu^R|^2$.
\end{proposition}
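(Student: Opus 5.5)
The plan is to apply the non-Hermitian Hellmann--Feynman theorem to the one-parameter family $h(s)=-d^2/dY^2+V(Y+s)$ on $[0,\Ly]$ with Dirichlet conditions. The key structural fact is that $h(s)$ is complex symmetric with respect to the bilinear (unconjugated) form $(\phi,\chi)\equiv\int_0^{\Ly}\phi(Y)\chi(Y)\,dY$. Both the kinetic operator $-d^2/dY^2$ and the multiplication operator $V(Y+s)$ satisfy $(\phi,h\chi)=(h\phi,\chi)$ for functions vanishing at $Y=0,\Ly$; for the kinetic term this follows from two integrations by parts, with the boundary terms killed by the Dirichlet conditions. Hence $h(s)^{\dagger}=\overline{h(s)}$, and the left eigenvector of $h(s)$ belonging to $\epnu(s)$ is the complex conjugate of the right one: $\phi_\nu^L=\overline{\phi_\nu^R}$. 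Consequently the biorthogonal pairing $\langle\phi^L_\nu|\chi\rangle$ reduces to $(\phi^R_\nu,\chi)$, and no complex conjugation appears anywhere in the formula.

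Next I would differentiate the eigen-equation $h(s)\phi_\nu(s)=\epnu(s)\phi_\nu(s)$ at $s=0$. For this I assume $\epnu(0)$ is a nondegenerate eigenvalue, in particular not an exceptional point, so that $\epnu(s)$ and $\phi_\nu(s)$ are analytic near $s=0$ by Kato's analytic perturbation theory. The $Y$-domain is fixed, and the boundary conditions do not depend on $s$. Differentiating gives
\begin{equation*}
h'(0)\phi_\nu+h(0)\phi_\nu'=\epnu'(0)\phi_\nu+\epnu(0)\phi_\nu'.
\end{equation*}
Pairing this with $\phi_\nu^R$ through $(\cdot,\cdot)$ and using symmetry, $(\phi_\nu^R,h(0)\phi_\nu')=(h(0)\phi_\nu^R,\phi_\nu')=\epnu(0)(\phi^R_\nu,\phi_\nu')$. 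The $\phi_\nu'$ terms therefore cancel, leaving $\epnu'(0)=(\phi^R_\nu,h'(0)\phi^R_\nu)/(\phi^R_\nu,\phi^R_\nu)$.

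With $h'(0)=\partial_sV(Y+s)|_{s=0}=iU'(Y)$, this yields $\epnu'(0)=i\alpha_\nu$, with $\alpha_\nu$ exactly as in Eq.~\eqref{eq:alpha}. The final remark of the proposition follows directly. For a purely imaginary $U\neq0$, $h(0)$ is not real, so $\phi^R_\nu$ generically cannot be chosen real; then $[\phi^R_\nu]^2$ is complex and $\alpha_\nu$ is complex in general.

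The main obstacle is the legitimacy of the denominator, namely that $(\phi^R_\nu,\phi^R_\nu)\neq0$. For a complex symmetric operator this c-product can vanish, which is precisely the self-orthogonality that occurs at an exceptional point. I would establish nonvanishing under the nondegeneracy assumption as follows. Because this is a Sturm--Liouville problem with Dirichlet data, each eigenvalue is geometrically simple, since the solution is fixed up to scale by $\phi(0)=0$. If in addition the eigenvalue is algebraically simple, the Riesz projector is $\phi^R(\phi^R,\cdot)/(\phi^R,\phi^R)$, which requires $(\phi^R,\phi^R)\neq0$. Thus the proposition should be stated, as I expect the paper does, away from exceptional points of $h(0)$, where $\epnu(s)$ is differentiable and the formula holds.
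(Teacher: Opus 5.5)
Your proposal is correct and follows essentially the paper's route: the paper applies the non-Hermitian Hellmann--Feynman theorem to $h(s)$ with $\partial_s h|_{s=0}=iU'(Y)$ and identifies the left eigenvector as $\phi_\nu^L\propto(\phi_\nu^R)^*$, which is exactly your complex-symmetric bilinear pairing, here rederived by integration by parts with Dirichlet boundary terms. Your argument that $\int_0^{\Ly}[\phi_\nu^R]^2\,dY\neq 0$ (geometric simplicity from the Dirichlet condition, plus algebraic simplicity via the rank-one Riesz projector) is more explicit than the paper's, which simply attributes this to the nondegeneracy condition with a citation to Kato.
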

The appearance of the bilinear form $[\phi_\nu^R(Y)]^2$ in the integrand, rather than the modulus squared $|\phi_\nu^R(Y)|^2$, stems from the biorthogonal structure of non-Hermitian quantum mechanics: for a purely imaginary potential, $h(0)^\dagger=T-iU$ ($T=-d^2/dY^2$), so the left eigenstates are $\phi_\nu^L=(\phi_\nu^R)^*$; in the biorthogonal inner product $(\phi_\nu^L)^*=\phi_\nu^R$, which produces $[\phi_\nu^R]^2$.
When $|s|=|\theta k|$ is small compared with the scale $s_*$ over which $\epnu(s)$ varies nonlinearly (i.e., the scale at which the second-order term becomes comparable to the first-order one, $s_*\sim|\epnu'(0)|/|\epnu''(0)|$; non-Hermitian perturbation theory estimates its magnitude as the ratio of the band gap to the potential gradient, $s_*\sim|\epnu-\epsilon_\mu|/\|V'\|\equiv s_\mathrm{gap}$, see Appendix~\ref{app:adiabatic}), the Taylor truncation $\epnu(s)\approx\epnu(0)+i\alpha_\nu s$ is valid, and inserting it into the dispersion relation yields the effective PBC dispersion
\begin{equation}\label{eq:dispersion-PBC}
E_\nu(k)\approx k^2+\epnu(0)+i\alpha_\nu\theta k.
\end{equation}
Its imaginary part, $\mathrm{Im}[E_\nu(k)]=\mathrm{Im}[\epnu(0)]+\mathrm{Re}[\alpha_\nu]\theta k$, is an odd function of $k$, the direct manifestation of dispersion non-reciprocity.

As an odd function of $s$, the non-reciprocal function $f_\nu(s)$ satisfies $f_\nu(0)=0$ and its Taylor series contains only odd powers:
\begin{equation}\label{eq:f-Taylor}
f_\nu(s)=2\epnu'(0)\,s+\tfrac{1}{3}\epnu'''(0)\,s^3+\cdots
\end{equation}
(the analyticity of $\epnu(s)$ at $s=0$ is guaranteed by the nondegeneracy condition~\cite{Kato1966}). Dispersion non-reciprocity ($f_\nu\not\equiv 0$) therefore implies the existence of a smallest odd order $2p+1$ with $\epnu^{(2p+1)}(0)\neq 0$. Since the nongeneric cases $p\geq 1$ require fine-tuning, we focus for convenience on the generic case $p=0$ [$\epnu'(0)\neq 0$].

\subsection{Effective Hatano--Nelson equation}\label{sec:adiabatic}
In Eq.~\eqref{eq:dispersion-PBC}, $k$ is a real good quantum number under PBC along $X$. Under OBC, however, $\hat{P}_X$ is no longer conserved and the dispersion relation cannot be used directly. Promoting the PBC dispersion to an effective OBC equation requires two approximations: (i)~an adiabatic approximation that reduces the two-dimensional OBC problem to band-resolved one-dimensional equations, and (ii)~a Taylor truncation that converts the exact one-dimensional equation into a solvable Hatano--Nelson form. Both become automatically exact in the thermodynamic limit $\Lx\to\infty$.

For a given energy $E$ and band index $\nu$, the $X$-direction wave vector $\beta$ is determined by the dispersion relation $\beta^2+\epnu(\theta\beta)=E$. Since $\epnu(s)$ is a general analytic function of $s$ (not a polynomial), this is a transcendental equation in $\beta$, generally possessing several (even infinitely many) roots $\beta_1,\beta_2,\ldots$\,. Each root corresponds to a bulk solution $e^{i\beta_j X}\phi_\nu(Y;\theta\beta_j)$, and the OBC eigenstate is their linear superposition:
\begin{equation}\label{eq:Psi-general}
\Psi(X,Y)=\sum_j c_j\,e^{i\beta_j X}\,\phi_\nu(Y;\theta\beta_j).
\end{equation}
The OBC along $X$ demand that $\Psi(0,Y)=0$ and $\Psi(\Lx,Y)=0$ hold simultaneously for all $Y\in[0,\Ly]$. At $X=0$,
\begin{equation}\label{eq:BC-functional}
\sum_j c_j\,\phi_\nu(Y;\theta\beta_j)=0,\qquad\forall\,Y\in[0,\Ly].
\end{equation}
This is a \emph{functional equation}: it is not a finite set of numerical conditions on the coefficients $\{c_j\}$, but rather requires a function of the continuous variable $Y$ to vanish identically. Expanding the $Y$ dependence in a complete basis [e.g., $\sin(m\pi Y/\Ly)$], Eq.~\eqref{eq:BC-functional} is equivalent to the vanishing of every expansion coefficient, i.e., infinitely many constraints imposed on finitely many unknowns $\{c_j\}$, an overdetermined system that generically admits no solution. We therefore invoke the adiabatic approximation, assuming that the variation of $\phi_\nu(Y;\theta\beta_j)$ among different roots is negligible: $\phi_\nu(Y;\theta\beta_j)\approx\phi_\nu(Y;0)$ for all $j$. The transverse wave function can then be factored out of the sum:
\begin{equation}\label{eq:Psi-factored}
\Psi(X,Y)\approx\phi_\nu(Y;0)\cdot\psi(X),\;\psi(X)\equiv\sum_j c_j e^{i\beta_j X},
\end{equation}
and the functional equation collapses to the scalar conditions $\psi(0)=\psi(\Lx)=0$. The adiabatic condition requires the parameter spacing of adjacent roots, $\Delta s=\theta|\beta_+-\beta_-|$, to be small compared with the characteristic scale $s_{\mathrm{gap}}\sim|\epnu-\epsilon_\mu|/\|V'\|$ over which the transverse wave function changes ($\epsilon_\mu$ is the eigenvalue of the nearest band and $\|V'\|\equiv\max_Y|V'(Y)|$; see Appendix~\ref{app:adiabatic} for the derivation). For a fixed mode index $n$, $\Delta s=2\pi n\theta/\Lx\propto 1/\Lx$ vanishes in the thermodynamic limit, and the condition is satisfied automatically. The $\{\phi_\nu(Y;0)\}$ of different bands are linearly independent, so the $X$-direction problem can be solved independently for each band.

Under the adiabatic condition, each plane-wave component $e^{i\beta_j X}$ of $\psi(X)=\sum_j c_j e^{i\beta_j X}$ obeys the dispersion relation $\beta_j^2+\epnu(\theta\beta_j)=E$. Since $e^{i\beta X}$ is an eigenfunction of the operator $-i\partial_X$ (with eigenvalue $\beta$), for any analytic function $\epnu$ one has $\epnu(-i\theta\partial_X)e^{i\beta X}=\epnu(\theta\beta)e^{i\beta X}$; superposing these components, $\psi(X)$ satisfies the effective one-dimensional equation [$\psi(0)=\psi(\Lx)=0$]
\begin{equation}\label{eq:exact-1D}
\begin{split}
\left[-\partial_X^2+\epnu(-i\theta\partial_X)\right]\psi=E\,\psi.
\end{split}
\end{equation}
This equation does not rely on any Taylor expansion and holds for arbitrary $\epnu(s)$. The operator $\epnu(-i\theta\partial_X)$ is a pseudo-differential operator~\cite{Zworski2012}: its symbol $\epnu(\theta\beta)$ is a general analytic function of $\beta$ rather than a polynomial, corresponding in position space to a nonlocal convolution operator (see Appendix~\ref{app:exact-1D}). Equation~\eqref{eq:exact-1D} is therefore exact, but the nonlocal operator $\epnu(-i\theta\partial_X)$ precludes an analytic solution; at this point we resort to the Taylor truncation, which reduces it to an algebraically solvable form. Using $\epnu(s)\approx\epnu(0)+i\alpha_\nu\,s$ (Proposition~\ref{prop:alpha}), Eq.~\eqref{eq:exact-1D} becomes the effective Hatano--Nelson equation
\begin{equation}\label{eq:HN}
\left[-\partial_X^2+\alpha_\nu\theta\partial_X+\epnu(0)\right]\psi(X)=E\psi(X).
\end{equation}
This is precisely the structure obtained by promoting the PBC dispersion Eq.~\eqref{eq:dispersion-PBC} of Sec.~\ref{sec:perturbative} to the coordinate representation ($k\to -i\partial_X$), here endowed with a rigorous OBC foundation through the adiabatic reduction. The purpose of the truncation is not to ``turn'' the non-polynomial $\epnu(s)$ into a polynomial, but to obtain analytic solutions while retaining the essential physics: the qualitative conclusion (dispersion non-reciprocity implies the NCSE) is guaranteed by the exact symmetry theorem of Sec.~\ref{sec:symmetry} and does not rely on the truncation; only the quantitative formulas do.

\begin{figure}[ptb]
\centering
{\includegraphics[width=1\linewidth]{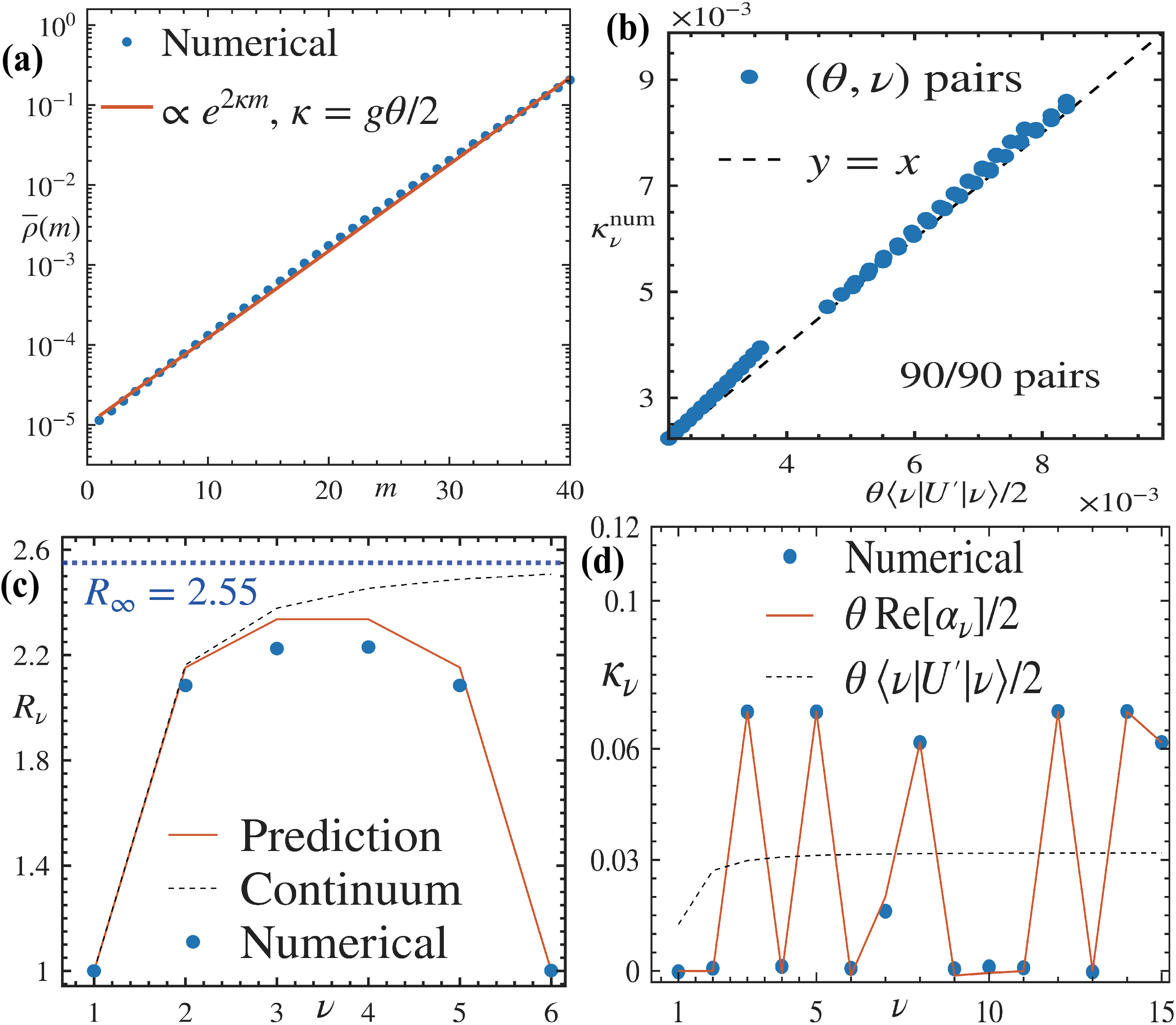}}
\caption{Numerical verification of Theorem~\ref{thm:NHSE}.
(a)~Linear imaginary potential $V{=}ig\hat{y}$ ($N_x{=}40$, $N_y{=}30$, $g{=}0.05$, $\theta{=}5$). The vertical axis is the eigenstate-averaged density along $X$, $\bar{\rho}(m){\equiv}\frac{1}{N_\mathrm{eig}}\sum_{\alpha,n}|\psi_\alpha(m,n)|^2$ (semilogarithmic scale), and the horizontal axis is the $X$-direction site index $m$; blue circles, numerical results; red solid line, analytic envelope $\propto e^{g\theta m}$. The two agree precisely.
(b),(c)~Numerical verification for the cubic imaginary potential $V{=}ig(Y{-}L_y/2)^3$ in the weak-potential limit ($N_x{=}500$, $N_y{=}6$, $g{=}3{\times}10^{-3}$, 15 values of $\theta\in[0.30,\,0.50]$):
(b)~the horizontal axis is Eq.~\eqref{eq:kappa-weak} and the vertical axis is $\kappa_\nu^{\mathrm{num}}$ extracted band by band from exact diagonalization. Each blue data point corresponds to one $(\theta,\nu)$ pair ($90=15$ values of $\theta$ $\times$ $N_y$ bands in total; each $\kappa_\nu$ is obtained by fitting the densities of the $N_x$ eigenstates within that band). The data points falling on the black dashed line $y{=}x$ confirm Eq.~\eqref{eq:kappa-weak}.
(c)~The vertical axis is the ratio $R_\nu{\equiv}\kappa_\nu/\kappa_1$ and the horizontal axis is the band index $\nu$: blue circles, numerical results (for each of the 15 values of $\theta\in[0.30,\,0.50]$, $R_\nu$ is computed separately and the median is taken, robustly removing extraction outliers at individual parameters); red solid line, analytic lattice prediction for $R_\nu$; black dashed line, analytic continuum-limit prediction; blue dotted line, asymptotic value $R_\infty{\approx}2.55$.
(d)~Numerical and analytic $\kappa_\nu$ for the cubic potential at general potential strength, with $N_x{=}60$, $N_y{=}15$, $g{=}0.01$, $\theta{=}0.10$; the horizontal axis is the band index $\nu$. Blue circles, numerical results; red solid line, analytic prediction of Theorem~\ref{thm:NHSE} [with $\alpha_\nu$ computed from the numerical eigenstates of $h(0)$]; black dashed line, weak-potential analytic prediction. Theorem~\ref{thm:NHSE} tracks the oscillatory structure, whereas the weak-potential formula deviates completely.
}
\label{fig:quantitative-verification}
\end{figure}

\subsection{Skin states, spectra, and the unified criterion}\label{sec:NCSE-derivation}
We now solve Eq.~\eqref{eq:HN} explicitly to establish the state-level and spectral signatures of the NCSE. Decomposing $\alpha_\nu=\aR+i\aI$ into its real and imaginary parts and setting $\psi=e^{ikX}$, the characteristic equation of Eq.~\eqref{eq:HN} reads
\begin{equation}\label{eq:char-eq}
k^2+i\alpha_\nu\theta\,k=E-\epnu(0),
\end{equation}
a quadratic equation in $k$ (a direct payoff of the Taylor truncation, which converts the transcendental equation into an algebraic one). Vieta's formulas relate the two roots $k_1,k_2$:
\begin{equation}\label{eq:vieta}
k_1+k_2=-i\alpha_\nu\theta,\qquad k_1 k_2=\epnu(0)-E.
\end{equation}
Hence $\mathrm{Re}[k_1+k_2]=\aI\theta$ and $\mathrm{Im}[k_1+k_2]=-\aR\theta$. The general OBC solution is $\psi(X)=Ae^{ik_1 X}+Be^{ik_2 X}$; imposing the boundary conditions yields
\begin{equation}\label{eq:k-roots}
k_{1,2}=\pm\frac{\pi n}{\Lx}+\frac{\aI\theta}{2}-i\frac{\aR\theta}{2}\,.
\end{equation}
The two roots share the same imaginary part and the same real offset, splitting symmetrically only by $\pm\pi n/\Lx$. Substituting Eq.~\eqref{eq:k-roots} into $\psi=A(e^{ik_1 X}-e^{ik_2 X})$ gives
\begin{equation}\label{eq:chi-OBC}
\psi_n(X)\propto e^{\kappa_\nu X}\,e^{i\aI\theta X/2}\,\sin\frac{n\pi X}{\Lx}\,,
\end{equation}
where we define the inverse skin length $\kappa_\nu\equiv\aR\theta/2$. The exponential envelope is independent of the mode index $n$: all OBC eigenstates localize on the same side with the same inverse length $\kappa_\nu$, the state-level signature of the NCSE. From Eqs.~\eqref{eq:vieta} and \eqref{eq:k-roots}, using $\aI\theta/2-i\kappa_\nu=-i\alpha_\nu\theta/2$ one directly computes $k_1 k_2=-\alpha_\nu^2\theta^2/4-\pi^2 n^2/\Lx^2$, so the OBC energies are
\begin{equation}\label{eq:E-OBC}
E_n^{\mathrm{OBC}}=\epnu(0)+\frac{\pi^2 n^2}{\Lx^2}+\!\left(\kappa_\nu+\frac{i\aI\theta}{2}\right)^{\!2}.
\end{equation}
Expanding, $\mathrm{Im}[E_n^{\mathrm{OBC}}]=\mathrm{Im}[\epnu(0)]+\kappa_\nu\aI\theta$: the imaginary part of the OBC energies is likewise independent of $n$, so all OBC eigenenergies share the same imaginary part. By contrast, the imaginary part of the PBC spectrum, $\mathrm{Im}[E_\nu^{\mathrm{PBC}}(k)]=\mathrm{Im}[\epnu(0)]+\aR\theta k$, spreads linearly in $k$ [Eq.~\eqref{eq:dispersion-PBC}], the spectral signature of the NCSE~\cite{Yao2018}. The same results can also be obtained through the similarity transformation $S=e^{-\alpha_\nu\theta X/2}$, which maps Eq.~\eqref{eq:HN} onto a Hermitian particle-in-a-box problem and recovers Eqs.~\eqref{eq:chi-OBC} and \eqref{eq:E-OBC} (see Appendix~\ref{app:exact-1D}).

The above analysis can be summarized in the following theorem:
\begin{theorem}[Unified criterion for the NCSE]\label{thm:NHSE}
Under the adiabatic approximation (automatically valid as $\Lx\to\infty$) and the leading-order Taylor expansion of $\epnu(s)$:
\begin{equation}\label{eq:NHSE-criterion}
\kappa_\nu\equiv\frac{\theta}{2}\,\mathrm{Re}[\alpha_\nu]\neq 0\;\Longleftrightarrow\;\text{NCSE}.
\end{equation}
At the state level, the OBC eigenstates Eq.~\eqref{eq:chi-OBC} localize exponentially on one side of the $X$ direction with the $n$-independent inverse length $\kappa_\nu$. At the spectral level, the OBC imaginary parts Eq.~\eqref{eq:E-OBC} are likewise $n$ independent and cannot coincide with the linear $k$ dependence of the PBC imaginary parts. At $\kappa_\nu=0$, both signatures of the NCSE disappear simultaneously.
\end{theorem}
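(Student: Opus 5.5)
Within the two stated approximations, the problem in each band $\nu$ is the effective Hatano--Nelson equation~\eqref{eq:HN} with Dirichlet conditions $\psi(0)=\psi(\Lx)=0$. So the plan is to solve that constant-coefficient ODE completely and read off both signatures. The adiabatic reduction of Sec.~\ref{sec:adiabatic} is taken as given: the factorization~\eqref{eq:Psi-factored} turns the functional boundary condition~\eqref{eq:BC-functional} into scalar conditions. Proposition~\ref{prop:alpha} gives the coefficient $i\alpha_\nu$ of the linear term.

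\emph{Step 1: remove the first-derivative term by a similarity transformation.} Set $\psi(X)=e^{\alpha_\nu\theta X/2}\lambda(X)$. The first-derivative term cancels, because its coefficient $-\alpha_\nu\theta+\alpha_\nu\theta$ vanishes. What remains is $-\lambda''=\bigl(E-\epnu(0)-\alpha_\nu^2\theta^2/4\bigr)\lambda$ with $\lambda(0)=\lambda(\Lx)=0$. The factor $e^{\alpha_\nu\theta X/2}$ never vanishes, so the Dirichlet conditions transfer unchanged from $\psi$ to $\lambda$. This is a Hermitian particle in a box. It gives the complete set $\lambda_n=\sin(n\pi X/\Lx)$ with $E_n-\epnu(0)-\alpha_\nu^2\theta^2/4=n^2\pi^2/\Lx^2$. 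Writing $\alpha_\nu/2\cdot\theta=\kappa_\nu+i\aI\theta/2$ reproduces Eqs.~\eqref{eq:chi-OBC} and \eqref{eq:E-OBC}. Completeness of the sine basis guarantees these are \emph{all} OBC eigenstates in the band. Equivalently, one can use the Vieta route~\eqref{eq:vieta}--\eqref{eq:k-roots}: nontrivial solutions require $e^{i(k_1-k_2)\Lx}=1$.

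\emph{Step 2: state level.} The modulus is $|\psi_n(X)|=e^{\kappa_\nu X}|\sin(n\pi X/\Lx)|$, because the phase $e^{i\aI\theta X/2}$ is unimodular. The envelope is therefore $n$-independent, and for $\kappa_\nu\neq0$ every state piles up at $X=\Lx$ (if $\kappa_\nu>0$) or at $X=0$ (if $\kappa_\nu<0$). If $\kappa_\nu=0$, the densities are $|\sin|^2$ profiles, i.e.\ extended. This proves both directions of the equivalence at the state level.

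\emph{Step 3: spectral level.} Expand Eq.~\eqref{eq:E-OBC} to get $\mathrm{Im}\,E_n^{\mathrm{OBC}}=\mathrm{Im}\,\epnu(0)+\kappa_\nu\aI\theta$, which is constant in $n$. Compare this with the PBC branch~\eqref{eq:dispersion-PBC}, where $\mathrm{Im}\,E(k)=\mathrm{Im}\,\epnu(0)+\aR\theta k=\mathrm{Im}\,\epnu(0)+2\kappa_\nu k$ on a real $k$ grid. If $\kappa_\nu\neq0$, the PBC imaginary parts take distinct values for distinct $k$, so the PBC spectrum is not contained in the horizontal line carrying the OBC spectrum. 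Hence the two spectra differ, which is the spectral fingerprint. If $\kappa_\nu=0$, the PBC imaginary part is $k$-independent as well. The real parts $k^2+\mathrm{Re}\,\epnu(0)$ versus $n^2\pi^2/\Lx^2+\mathrm{Re}\,\epnu(0)-\aI^2\theta^2/4$ then fill the same half-line as $\Lx\to\infty$. The only apparent mismatch is that the PBC states carry an $X$-phase twist $\aI\theta/2$. Completing the square, $k^2+\aI\theta k\cdot i\cdot i$..., i.e.\ $(k-\aI\theta/2)^2-\aI^2\theta^2/4$ shifted by the real quantity, shows that the twist is a real gauge shift of $k$. So OBC and PBC spectra coincide asymptotically, and both signatures vanish together.

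The main obstacle is Step 3 at $\kappa_\nu=0$ with $\aI\neq0$. There one must check that $i\aI\theta k$ contributes only a real, momentum-shifting term, so that no residual spectral mismatch remains. I would handle it by noting that $e^{i\aI\theta X/2}$ is a unitary gauge transformation that maps the PBC problem to the same Hermitian box problem. This works up to a boundary twist, which becomes irrelevant to the continuum spectrum as $\Lx\to\infty$.
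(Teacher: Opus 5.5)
Your proposal is correct and is essentially the paper's argument. The paper solves Eq.~\eqref{eq:HN} via the characteristic roots and Vieta's formulas in the main text, and confirms the result with exactly your similarity transformation $S=e^{-\alpha_\nu\theta X/2}$ (Appendix~\ref{app:exact-1D}); it then reads off the $n$-independent envelope $e^{\kappa_\nu X}$ and the $n$-independent $\mathrm{Im}\,E_n^{\mathrm{OBC}}$, and compares them with the $k$-linear PBC imaginary part, just as you do.

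One small slip: the PBC real part is $k^2-\aI\theta k+\mathrm{Re}\,\epnu(0)=(k-\aI\theta/2)^2-\aI^2\theta^2/4+\mathrm{Re}\,\epnu(0)$, not $k^2+\mathrm{Re}\,\epnu(0)$. With the correct expression, the $\kappa_\nu=0$ real-part matching you flag as the main obstacle is immediate; it also goes beyond the theorem's spectral claim, which concerns only imaginary parts.
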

For general potential strength, $\alpha_\nu\in\mathbb{C}$ (Proposition~\ref{prop:alpha}); quantitative numerical tests of Eq.~\eqref{eq:NHSE-criterion} are presented in Secs.~\ref{sec:numerical-verification} and \ref{sec:peierls}. The ranges of validity of the two approximations are jointly controlled by the characteristic scale $s_\mathrm{gap}\sim|\epnu-\epsilon_\mu|/\|V'\|$: for a fixed mode index $n$, the adiabatic condition ($\Delta s{=}2\pi n\theta/\Lx\ll s_{\mathrm{gap}}$) and the Taylor-truncation condition ($|\theta k|\ll s_{\mathrm{gap}}$) are both satisfied automatically in the thermodynamic limit $\Lx\to\infty$. All numerical verifications in this work adopt the global adiabatic condition $j_{\mathrm{safe}}\gtrsim N_x$, which ensures that every $X$-direction mode lies within the range of validity (the definition and derivation of $j_{\mathrm{safe}}$ are given in Appendix~\ref{app:adiabatic}); this condition is a requirement of the numerical protocol, whereas Theorem~\ref{thm:NHSE} itself holds exactly for fixed $n$ in the thermodynamic limit. Both approximations presuppose the nondegeneracy of $h(s)$ (Appendix~\ref{app:validity}).

\subsection{Numerical verification}\label{sec:numerical-verification}

To test the quantitative predictions above, we construct lattice models and diagonalize them exactly. On a square lattice (lattice constant $a=1$), the kinetic term is discretized by standard central differences, and the potential is discretized order by order through the Bopp expansion Eq.~\eqref{eq:Bopp-expansion} (the complete lattice Hamiltonian and its derivation are given in Appendix~\ref{app:lattice-general}). The physical character of each Bopp order is rigorously fixed by the algebraic factor $(-1)^k i^{k+1}$: odd orders generate non-reciprocal hopping along $X$ and are the source of the NCSE, while even orders generate only symmetric corrections that preserve reciprocity. The leading non-reciprocal hopping correction, $\delta t_n^{(1)}=\theta U'(Y_n)/(2a)$, is proportional to $\theta$ and to the local potential gradient, consistent with the first-order term of Eq.~\eqref{eq:Bopp-expansion} in the continuum theory. For a polynomial potential of degree $N$ ($U^{(k)}\equiv 0$ for $k>N$), the expansion terminates exactly at $k=N$ and the lattice Hamiltonian carries no truncation error (the complete Bopp expansions and parameter constraints for each potential are given in Appendix~\ref{app:lattice-general}).

In the weak-potential limit ($\max|U|\ll\Delta\epsilon_Y\sim\pi^2/\Ly^2$, i.e., the imaginary potential is much smaller than the $Y$-direction energy scale; see Appendix~\ref{app:weak-potential} for the derivation), $\phi_\nu^R$ approaches the real eigenfunction $\phi_\nu^{(0)}$, the $Y$-direction wave functions are barely modified by the imaginary potential, and $\alpha_\nu$ in Proposition~\ref{prop:alpha} reduces to a real number:
\begin{equation}\label{eq:kappa-perturbative}
\alpha_\nu^{\mathrm{weak}}=\langle\nu|U'|\nu\rangle\in\mathbb{R},
\end{equation}
where $|\nu\rangle\equiv\phi_\nu^{(0)}$ are the box eigenstates at $V=0$. Each band $\nu$ then realizes an effective one-dimensional Hatano--Nelson model, with the inverse skin length given by Eqs.~\eqref{eq:NHSE-criterion} and \eqref{eq:kappa-perturbative}:
\begin{equation}\label{eq:kappa-weak}
\kappa_\nu^{\mathrm{weak}}=\frac{\theta}{2}\,\langle\nu|U'|\nu\rangle\in\mathbb{R}.
\end{equation}
The physical content of this formula is transparent: what determines the skin strength of the $\nu$th band is the potential gradient $U'$ weighted by the density of that transverse mode, and the noncommutativity $\theta$ converts this averaged gradient into the strength of the imaginary gauge field along $X$.

\textbf{Linear potential} $V=ig\hat{y}$: $U'=g$ is constant, $\langle\nu|U'|\nu\rangle=g$ for all $\nu$, and $\kappa_\nu=g\theta/2$, in agreement with the exact solution of Sec.~\ref{sec:linear-example}. Only the $k=1$ term of the Bopp expansion is nonzero, and the lattice Hamiltonian reduces to the standard Hatano--Nelson model with uniform non-reciprocal hopping; the numerical verification has already been shown in Sec.~\ref{sec:linear-example} [Fig.~\ref{fig:quantitative-verification}(a)].

\textbf{Cubic potential} $V=ig(Y-\Ly/2)^3$: $U'(Y)=3g(Y-\Ly/2)^2$ is symmetric about $\Ly/2$ and strictly positive, so $\kappa_\nu>0$ for all $\nu$. The Bopp expansion terminates exactly at $k=3$. Equation~\eqref{eq:kappa-weak} gives $\kappa_\nu=3g\theta\sigma_\nu^2/2$, with (see Appendix~\ref{app:cubic} for the derivation)
\begin{equation}\label{eq:sigma-nu}
\sigma_\nu^2=\frac{\Ly^2}{12}\!\left(1-\frac{6}{\nu^2\pi^2}\right).
\end{equation}
Here $\kappa_\nu$ is mode dependent: the lowest band ($\nu=1$) exhibits the weakest skin effect ($\sigma_1^2\approx 0.033\,\Ly^2$), highly excited bands approach the constant $\kappa_\infty=g\theta\Ly^2/8$ ($\sigma_\nu^2 \to \Ly^2/12$), and the ratio of skin strengths is $\kappa_\infty/\kappa_1\approx 2.55$. Figure~\ref{fig:quantitative-verification}(b) shows the scatter plot of the band-resolved $\kappa_\nu^{\mathrm{num}}$ against the prediction of Eq.~\eqref{eq:kappa-weak} in the weak-potential limit; the data points fall precisely on the $y{=}x$ reference line. On a finite lattice, $R_\nu\equiv\kappa_\nu/\kappa_1$ displays an arch-shaped structure symmetric about $\nu=(N_y+1)/2$ [Fig.~\ref{fig:quantitative-verification}(c)], which originates from Nyquist aliasing of the discrete sampling: the lattice box eigenstates obey the probability identity
\begin{equation}\label{eq:Nyquist-identity}
[\phi_\nu^{(0)}(n)]^2=[\phi_{N_y+1-\nu}^{(0)}(n)]^2,\quad\forall\,n,
\end{equation}
where $\phi_\nu^{(0)}(n)\propto\sin(\nu\pi n/(N_y+1))$, which follows directly from the trigonometric identity $\sin(\pi n-x)=(-1)^{n+1}\sin x$ for integer $n$. Inserting it into Eq.~\eqref{eq:kappa-perturbative} yields $\alpha_{N_y+1-\nu}=\alpha_\nu$, i.e., $R_{N_y+1-\nu}=R_\nu$. This identity is a lattice property of the box eigenstates themselves, independent of the specific potential, so the band-resolved $\kappa_\nu$ of the periodic imaginary potentials display the same arch symmetry (see Sec.~\ref{sec:peierls}). The continuum closed-form expressions [such as Eq.~\eqref{eq:sigma-nu}] replace discrete sums by integrals, implicitly assuming $\nu/N_y\to 0$; they therefore miss the lattice Nyquist aliasing at the high-$\nu$ end ($\nu\sim N_y$), where lattice effects are significant, while agreeing completely with the lattice result for $\nu\ll N_y$.

At general potential strength [Fig.~\ref{fig:quantitative-verification}(d)], $\alpha_\nu$ is computed from the numerical eigenstates of $h(0)$; Eq.~\eqref{eq:NHSE-criterion} tracks the band-to-band oscillatory structure of the numerical results, whereas Eq.~\eqref{eq:kappa-weak}, built on the real box eigenstates, deviates completely, confirming the validity of Theorem~\ref{thm:NHSE} beyond the weak-potential regime. Quantitative verification for periodic imaginary potentials is presented in Sec.~\ref{sec:peierls}.

\section{Exact Symmetry Criterion}\label{sec:symmetry}
Theorem~\ref{thm:NHSE} captures only the leading order $\epnu'(0)$ of the non-reciprocal function $f_\nu(s)$ and would miss the NCSE in the nongeneric situation where $\epnu'(0)=0$ but higher odd-order terms survive. The following theorem provides a criterion that covers all orders and holds exactly for finite-size systems.

\begin{theorem}[Symmetry criterion]\label{thm:symmetry}

(i)~(State level) If $V(\Ly-Z)=V(Z)$ $\forall\,Z\in\mathbb{R}$, then $[\R,H]=0$. The $X$-direction density of every OBC eigenstate satisfies $\rho(\Lx-X)=\rho(X)$: no NCSE.

(ii)~(Spectral level) If $V(\Ly-Z)=V(Z)$ $\forall\,Z\in\mathbb{R}$ and $h(s)$ is nondegenerate for all $s\in\mathbb{R}$, then $\epnu(s)=\epnu(-s)$ $\forall\,\nu,s$: the dispersion is fully reciprocal.

(iii)~If the symmetry condition fails, then for generic potentials in function space, $\exists\,\nu$ such that $\mathrm{Re}[\alpha_\nu]\neq 0$: the NCSE appears.
\end{theorem}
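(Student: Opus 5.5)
The plan is to realize the mirror symmetry of $V$ as an antiunitary-free, purely unitary reflection on the Bopp-shifted problem, Eq.~\eqref{eq:Bopp-H}. For (i) I take $\R$ to be the combined point reflection $\R:\Psi(X,Y)\mapsto\Psi(\Lx-X,\Ly-Y)$. It maps the rectangle $[0,\Lx]\times[0,\Ly]$ onto itself, so it preserves the Dirichlet conditions, and it acts on the canonical operators as $\hat X\to\Lx-\hat X$, $\hat Y\to\Ly-\hat Y$, $\hat P_{X,Y}\to-\hat P_{X,Y}$. The key one-line computation is
$$\R(\hat Y+\theta\hat P_X)\R^{-1}=\Ly-(\hat Y+\theta\hat P_X).$$
Since $V(\Ly-Z)=V(Z)$ as a function (and $[\hat Y,\hat P_X]=0$ makes the functional calculus unambiguous), this gives $\R V(\hat Y+\theta\hat P_X)\R^{-1}=V(\hat Y+\theta\hat P_X)$. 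The kinetic term is trivially invariant, hence $[\R,H]=0$. On the lattice I would check the same statement directly: the site map $(m,n)\mapsto(N_x+1-m,N_y+1-n)$ sends the Bopp-order hoppings of Appendix~\ref{app:lattice-general} into themselves, because odd orders change sign under both $n$-reflection of $U^{(k)}$ and hopping-direction reversal. For a nondegenerate eigenvalue, $\R\Psi=\pm\Psi$, so $|\Psi(\Lx-X,\Ly-Y)|=|\Psi(X,Y)|$; integrating over $Y$ then gives $\rho(\Lx-X)=\rho(X)$. For degenerate eigenspaces I would state the result for an $\R$-adapted basis, or equivalently for the eigenspace-summed density, which is invariant regardless of the basis.

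For (ii), I work at fixed $k$ with the transverse reflection $P:\phi(Y)\mapsto\phi(\Ly-Y)$, which preserves the Dirichlet conditions on $[0,\Ly]$. Using the symmetry of $V$,
$$V(\Ly-Y+s)=V\big(\Ly-(\Ly-Y+s)\big)=V(Y-s),$$
so $P\,h(s)\,P^{-1}=h(-s)$ and the spectra of $h(s)$ and $h(-s)$ coincide as sets. The remaining issue is the labeling of the branches. The labels $\epnu(s)$ are defined by continuation from $s=0$, which is well defined since $h(s)$ is nondegenerate for all real $s$ (Kato analyticity). The function $s\mapsto\epnu(-s)$ is then also a continuous eigenvalue branch of $h(s)$, because it is an eigenvalue of $h(-s)$, hence of $h(s)$. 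It agrees with $\epnu(s)$ at $s=0$. Because branches never collide, two continuous branches that agree at one point agree everywhere, which gives $\epnu(s)=\epnu(-s)$. In particular $\alpha_\nu=0$, and every odd Taylor coefficient in Eq.~\eqref{eq:f-Taylor} vanishes.

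For (iii), I decompose $U=U_++U_-$ into its even and odd parts about $\Ly/2$. The symmetric case forces $\alpha_\nu=0$, so the NCSE is triggered by $U_-$. I then view $\mathrm{Re}\,\alpha_\nu$, via Proposition~\ref{prop:alpha}, as a real-analytic functional of $U$ on the open set where $h(0)$ is nondegenerate, using analytic dependence of nondegenerate eigenvectors on the potential. The plan is to exhibit one asymmetric potential with $\mathrm{Re}\,\alpha_\nu\neq0$; the linear potential $U=gY$ gives $\alpha_\nu=g$ for every $\nu$. From there I argue that the zero set of a nonidentically-vanishing real-analytic functional restricted to any real-analytic family (for instance the segment $U_t=U+t(gY-U)$) is discrete. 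This shows that $\{U:\mathrm{Re}\,\alpha_\nu=0\ \forall\nu\}$ is closed with empty interior among asymmetric potentials, which is the intended sense of ``generic''. A cleaner weak-coupling check uses Eq.~\eqref{eq:kappa-perturbative}: $\langle\nu|U'|\nu\rangle=\langle\nu|U_-'|\nu\rangle$ is a linear functional that is nonzero on $U_-=Y-\Ly/2$.

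I expect the main obstacle to be (iii): making ``generic in function space'' precise. This requires controlling analyticity of $\alpha_\nu$ across the nondegenerate region, which may be disconnected, and ensuring that the segments used stay inside it. I would first try to handle this by restricting to finite-dimensional analytic families of potentials (polynomials, or the lattice), where the standard nowhere-density of zero sets of nonzero analytic functions applies directly. The degenerate-eigenspace subtlety in (i) is the secondary obstacle.
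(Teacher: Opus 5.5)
Your parts (i) and (ii) follow the paper's proof. You use the same point reflection $\R$ with $\R\hat Q\R^{-1}=\Ly-\hat Q$. You use the same transverse reflection $\phi(Y)\mapsto\phi(\Ly-Y)$ intertwining $h(s)$ with $h(-s)$, followed by label matching through continuation along the EP-free real axis. Two small differences: your remark that the branches $\epnu(s)$ and $\epnu(-s)$ trivially agree at $s=0$ replaces the paper's step $\tilde\phi_\nu\propto\phi_\nu$, and your direct lattice check of $[\R,H]=0$ avoids having to say what $V(\hat Y+\theta\hat P_X)$ means with Dirichlet conditions in $X$.

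Part (iii) is where you take a genuinely different route. The paper works at weak coupling, where $\mathrm{Re}\,\alpha_\nu=\langle\nu|U'|\nu\rangle$, and uses completeness of the cosine basis. Orthogonality of $U'$ to every $\sin^2(\nu\pi Y/\Ly)$ kills the mean and all even cosine coefficients of $U'$, which forces $U'$ to be odd about $\Ly/2$. This gives a sharp result: at weak coupling \emph{every} potential asymmetric on $[0,\Ly]$ has some $\mathrm{Re}\,\alpha_\nu\neq0$. The general-strength case, however, is only asserted via an ``analogous Fourier analysis.''

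Your argument (a branch-independent witness $U=gY$ plus analyticity of $\alpha$ in the potential) makes ``generic'' precise, as an exceptional set with empty interior, at arbitrary strength. The paper does not actually prove this. On the other hand, your weak-coupling check exhibits only one nonzero linear functional. To recover the paper's if-and-only-if you would still need its joint-kernel argument over all $\nu$.

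The obstacle you flag can be removed without restricting to finite-dimensional families. Along $U_t=U+t(gY-U)$, write
\[
\mathrm{Re}\,\alpha(t)=\tfrac12\bigl[\alpha(t)+\overline{\alpha(\bar t)}\bigr],
\]
which is holomorphic in complex $t$ away from isolated exceptional points. If it vanished near $t=0$, its continuation along any path avoiding those points would vanish at $t=1$. But at $t=1$ (take $g$ small so that $h(0)$ is nondegenerate there) every branch gives $\alpha=g$, because the rank-one eigenprojection has unit trace and $U'=g$. So monodromy is harmless and the contradiction goes through.

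Finally, note that $\alpha_\nu$ only sees $U$ on $[0,\Ly]$. Potentials symmetric on the box but not on all of $\mathbb{R}$ violate the theorem's hypothesis, yet every $\alpha_\nu$ vanishes for them. Your genericity formulation absorbs these cases, whereas the paper's iff really concerns symmetry on $[0,\Ly]$.
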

Here $\R$ is the two-dimensional point-reflection operator about the center of the rectangle $[0,\Lx]\times[0,\Ly]$, $(\R\Psi)(X,Y)\equiv\Psi(\Lx-X,\,\Ly-Y)$, satisfying $\R^2=\mathbf{1}$ with eigenvalues $\pm 1$. The Bopp shift couples $\hat{Y}$ and $\hat{P}_X$ into the composite coordinate $\hat{Q}\equiv\hat{Y}+\theta\hat{P}_X$; only flipping both simultaneously realizes $\R\hat{Q}\R^{-1}=\Ly-\hat{Q}$, which is why a two-dimensional point reflection, rather than a reflection along $Y$ alone, is required. The verification of the operator transformation rules and the complete proof are given in Appendix~\ref{app:symmetry-proof}. The geometric content of the symmetry condition can be read off directly from the sliding-window picture [Fig.~\ref{fig:mechanism}(c)]: the reflection $Z\mapsto\Ly-Z$ maps the window $[s,\,\Ly+s]$ onto $[-s,\,\Ly-s]$; when the potential landscape is mirror symmetric about the box center $\Ly/2$, the $\pm s$ windows are mirror images of each other and necessarily isospectral; this is the geometric essence of the similarity transformation in (ii). Conversely, mirror asymmetry renders the $\pm s$ windows spectrally inequivalent, which is the content of (iii).

Theorem~\ref{thm:symmetry} holds rigorously for finite-size systems, independent of the thermodynamic limit and of the adiabatic approximation, its key advantage over Theorem~\ref{thm:NHSE}; numerical tests are presented in Sec.~\ref{sec:symmetry-robustness}. Despite their different domains of validity, Theorem~\ref{thm:symmetry} strictly implies the no-NCSE prediction of Theorem~\ref{thm:NHSE}: if the symmetry condition holds, Theorem~\ref{thm:symmetry}(ii) gives $\epnu(s)=\epnu(-s)$, i.e., $\epnu(s)$ is an even function of $s$ whose odd-order derivatives all vanish. In particular $\epnu'(0)=i\alpha_\nu=0$, hence $\kappa_\nu=0$, recovering the prediction of Theorem~\ref{thm:NHSE}. The complete chain of implications reads
\begin{equation}\label{eq:hierarchy}
\begin{aligned}
V(\Ly\!-\!Z)=V(Z)\xRightarrow{\text{Thm~\ref{thm:symmetry}}}\epnu(s)=\epnu(-s)\Rightarrow\alpha_\nu=0
\end{aligned}
\end{equation}
None of the converses holds in general: $\kappa_\nu=0$ does not imply $\alpha_\nu=0$ ($\alpha_\nu$ may be purely imaginary); $\alpha_\nu=0$ does not imply $\epnu(s)=\epnu(-s)$ (higher odd-order terms may survive). The only situation in which the two theorems yield different predictions is an asymmetric $V$ fine-tuned such that $\epnu'(0)$ happens to vanish: Theorem~\ref{thm:NHSE} then misses the NCSE driven by higher-order terms ($\kappa\sim\theta^3$), whereas Theorem~\ref{thm:symmetry} covers all orders. Whenever Theorem~\ref{thm:NHSE} yields $\kappa_\nu\neq 0$, the contrapositive guarantees that $V$ violates the symmetry condition, so the conclusions of the two theorems agree.

Two remarks delimit the scope of Theorem~\ref{thm:symmetry}. (a)~The finite interval with OBC along $Y$ is the structural prerequisite for the theorem to have nontrivial content: in the absence of boundaries, or under PBC, $h(s)$ is unitarily equivalent to $h(0)$ and $\epnu(s)$ is independent of $s$, so the symmetry criterion degenerates into a trivial statement (Sec.~\ref{sec:perturbative}; Appendix~\ref{app:BC-necessity}). (b)~The theorem only decides the presence or absence of the NCSE; the quantitative evaluation of $\kappa_\nu$ requires Theorem~\ref{thm:NHSE}; the two theorems play complementary roles.

\begin{figure}[ptb]
\centering
{\includegraphics[width=1\linewidth]{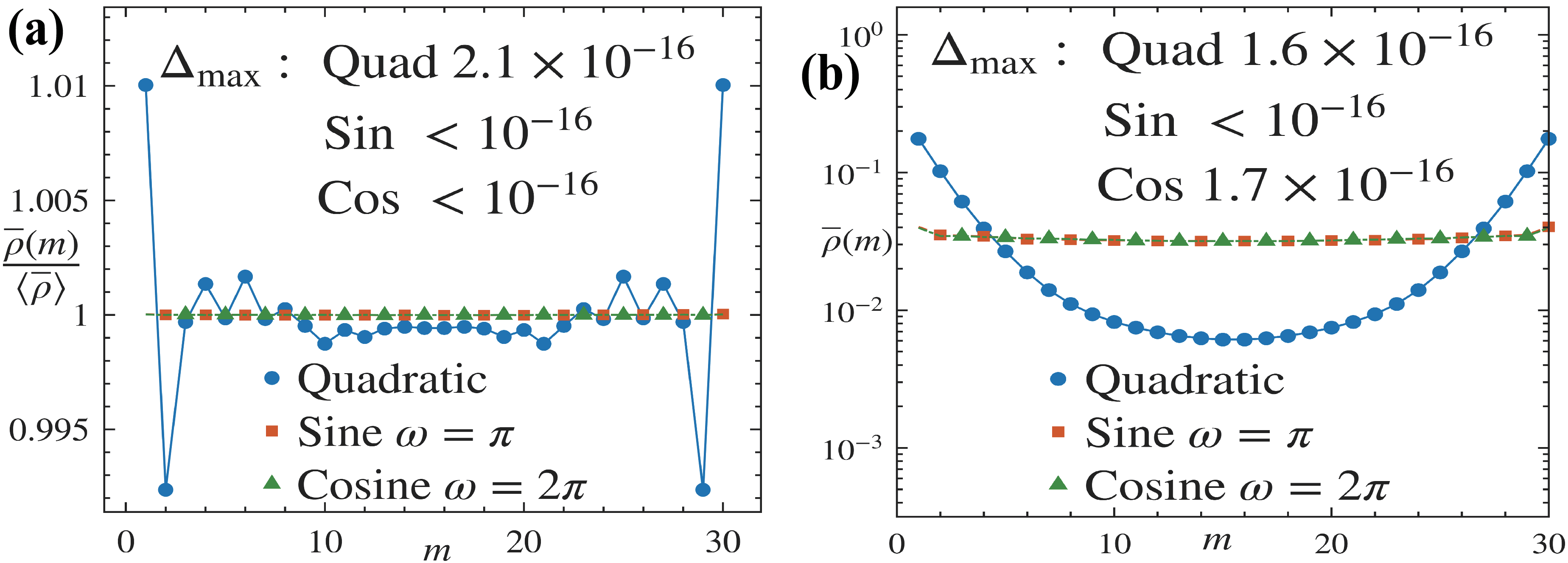}}
\caption{Numerical verification of the exact protection by Theorem~\ref{thm:symmetry}, for three symmetric potentials [blue circles: quadratic potential $ig(Y{-}\Ly/2)^2$; red squares: sine potential with $\omega{=}\pi$; green triangles: cosine potential with $\omega{=}2\pi$], $N_x{=}N_y{=}30$.
(a)~Mild parameters, $g{=}10^{-3}$ (both the weak-potential and adiabatic conditions are satisfied): the vertical axis is $\bar{\rho}(m)/\langle\bar{\rho}\rangle$ (density normalized by its spatial mean, linear scale). In the weak-potential regime the densities of all three potentials are nearly uniform, and normalizing to the mean renders the ${\sim}1\%$ shape differences discernible on a linear scale; quadratic potential, $\theta{=}0.5$; sine potential, $\theta{=}\Ly/\pi$; cosine potential, $\theta{=}\Ly/(2\pi)$.
(b)~Extreme parameters, $g{=}0.1$: the vertical axis is $\bar{\rho}(m)$ (semilogarithmic scale); quadratic potential with $\theta{=}3$, for which both the weak-potential and adiabatic conditions fail. In both parameter regimes the symmetry accuracy is $\Delta_{\max}{\lesssim}10^{-16}$, i.e., the density symmetry holds exactly for finite-size systems, independent of any approximation.}
\label{fig:symmetry-verification}
\end{figure}
\subsection{Numerical tests of the exact protection}\label{sec:symmetry-robustness}
Does the symmetry protection of Theorem~\ref{thm:symmetry} remain exact when the adiabatic approximation, the Taylor truncation, and even the weak-potential condition all fail? The answer is affirmative: the proof relies solely on the $\mathcal{R}$ symmetry and the nondegeneracy of $h(s)$, invoking no approximation whatsoever.

We test this exact protection with three structurally distinct symmetric potentials: (i)~the polynomial imaginary potential $V=ig(Y-\Ly/2)^2$, which satisfies $V(\Ly-Z)=V(Z)$, with the Bopp expansion terminating exactly at $k=2$; (ii)~the sine imaginary potential $V=ig\sin(Y/\ell)$ at $\omega\equiv\Ly/\ell=\pi$, for which $\sin((\Ly-Z)/\ell)=\sin(\pi-Z/\ell)=\sin(Z/\ell)$; (iii)~the cosine imaginary potential $V=ig\cos(Y/\ell)$ at $\omega=2\pi$, for which $\cos((\Ly-Z)/\ell)=\cos(2\pi-Z/\ell)=\cos(Z/\ell)$, likewise satisfying the symmetry condition. The three potentials are completely different in functional form (polynomial, sine, and cosine), yet the symmetry condition $V(\Ly-Z)=V(Z)$ uniformly guarantees the absence of the NCSE.

As shown in Figs.~\ref{fig:symmetry-verification}(a) and \ref{fig:symmetry-verification}(b), both for mild parameters ($g{=}10^{-3}$, with the weak-potential and adiabatic conditions satisfied) and for extreme ones [$g{=}0.1$; for the quadratic potential the weak-potential condition is violated by a factor of $\max|U|/\Delta\epsilon_Y\approx 680$ and the adiabatic condition by $\Delta s/s_\mathrm{gap}\approx 57$; for the sine and cosine potentials the weak-potential violation is about a factor of $3$], the eigenstate-averaged density $\bar\rho(m)$ along $X$ coincides with its mirror image $\bar\rho(N_x+1-m)$ on every lattice site, with the maximal asymmetry
\begin{equation}\label{eq:sym-error}
\Delta_{\max}\equiv\frac{\max_m|\bar\rho(m)-\bar\rho(N_x+1-m)|}{\max_m\bar\rho(m)}\lesssim 10^{-16},
\end{equation}
at the level of machine precision. This directly confirms the exactness of Theorem~\ref{thm:symmetry}(i): the density symmetry $\rho(\Lx-X)=\rho(X)$ is an algebraic consequence of the $\mathcal{R}$ symmetry, holding rigorously for finite-size systems regardless of the potential strength, the magnitude of $\theta$, or any perturbative approximation.

\section{Symmetry-protected measurement of the noncommutativity parameter}\label{sec:peierls}
The exact symmetry protection of Theorem~\ref{thm:symmetry} does more than decide the presence or absence of the NCSE: it predicts a hitherto unreported phenomenon: when the reflection symmetry of the imaginary potential is continuously modulated by the system geometry, the skin direction of all modes can flip collectively at specific values of a geometric parameter. The flip point is determined purely by geometry and is independent of the potential strength, furnishing a zero-crossing measurement scheme intrinsically robust against systematic errors. In this section we establish an exact lattice realization of this prediction for periodic imaginary potentials (Sec.~\ref{sec:peierls-construction}), analyze the collective flip for the sine potential (Sec.~\ref{sec:sine-potential}), and reveal the symmetry-protected nature of the flip through the contrast with the cosine potential (Sec.~\ref{sec:cosine-potential}).

\begin{figure*}[ptb]
\centering
{\includegraphics[width=1\linewidth]{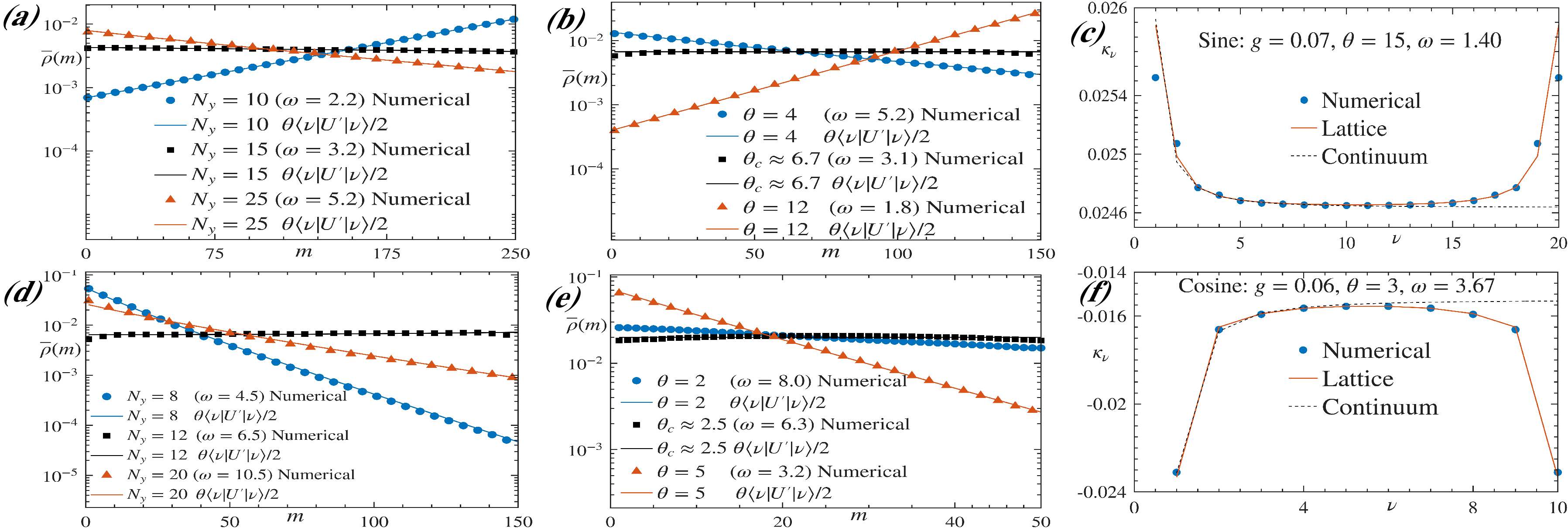}}
\caption{Skin-direction reversal for periodic imaginary potentials and quantitative verification of $\kappa_\nu$. Square lattice ($a_0{=}b_0{=}1$), exact Peierls mapping (commensurability condition $\theta/\ell{=}a_0$). In all density panels the horizontal axis is the $X$-direction site index $m$ and the vertical axis is the eigenstate-averaged density $\bar{\rho}(m)$ (semilogarithmic scale); circles/squares/triangles, numerical results; solid lines, $\bar{\rho}(m)$ predicted by Theorem~\ref{thm:NHSE}.
In (c) and (f) the horizontal axis is the band index $\nu$ and the vertical axis is the inverse skin length $\kappa_\nu$.
(a),(b)~Direction flip for the sine potential $V{=}ig\sin(\hat{y}/\ell)$: (a)~fixed $\theta{=}5$, $g{=}0.03$, $N_x{=}250$, varying $N_y\in\{10,15,25\}$ ($\omega{=}\Ly/\ell=2.2,3.2,5.2$), taking $\omega$ across the symmetry point $\pi$;
(b)~fixed $N_y{=}20$, $g{=}0.05$, $N_x{=}150$, varying $\theta\in\{4,\theta_c{\approx}6.68,12\}$, corresponding to $\omega\in\{5.25,\pi,1.75\}$.
(d),(e)~Absence of a collective flip for the cosine potential $V{=}ig\cos(\hat{y}/\ell)$: (d)~fixed $\theta{=}2$, $g{=}0.15$, $N_x{=}150$, varying $N_y\in\{8,12,20\}$ ($\omega{=}4.5$, $6.5$, $10.5$), taking $\omega$ across $2\pi$;
(e)~fixed $N_y{=}15$, $g{=}0.1$, $N_x{=}50$, varying $\theta\in\{2,\theta_c{\approx}2.55,5\}$ ($\omega{=}8$, $2\pi$, $3.2$).
(c),(f)~Band-resolved numerical and analytic $\kappa_\nu$ for the sine and cosine potentials, respectively: blue circles, $\kappa_\nu$ extracted band by band from exact diagonalization; red solid lines, lattice weak-potential prediction; black dashed lines, continuum weak-potential prediction. All bands share the same sign of $\kappa_\nu$ (collective skin localization). (c)~$N_x{=}80$, $N_y{=}20$; the sign of $\kappa_\nu$ is set by $\sin\omega$. (f)~$N_x{=}100$, $N_y{=}10$; the sign is set by $(1{-}\cos\omega){\geq}0$.
}
\label{fig:periodic-verification}
\end{figure*}

\subsection{Exact lattice model via Peierls substitution}\label{sec:peierls-construction}
The reason for choosing periodic potentials is that for non-polynomial potentials the Bopp expansion Eq.~\eqref{eq:Bopp-expansion} no longer terminates, and the lattice Hamiltonian in principle carries truncation errors. Periodic imaginary potentials, however, provide a non-polynomial example that can be treated exactly: the Baker--Campbell--Hausdorff formula for exponential operators degenerates into an exact product when $[\hat{Y},\hat{P}_X]=0$, yielding an exact lattice representation without any Taylor expansion. Take $V(\hat{y})=ig\sin(\hat{y}/\ell)$ as an example. Using Euler's formula $ig\sin(\hat{y}/\ell)=\frac{g}{2}(e^{i\hat{y}/\ell}-e^{-i\hat{y}/\ell})$ and the Bopp shift $\hat{y}=\hat{Y}+\theta\hat{P}_X$, the exponential operators factorize exactly thanks to $[\hat{Y},\hat{P}_X]=0$:
\begin{equation}\label{eq:exp-factorize}
e^{\pm i\hat{y}/\ell}=e^{\pm i\hat{Y}/\ell}\cdot e^{\pm i\theta\hat{P}_X/\ell}.
\end{equation}
In the coordinate representation, $e^{i\theta\hat{P}_X/\ell}=e^{\theta\partial_X/\ell}$ is the operator translating $X$ by $\theta/\ell$. Requiring this translation to land exactly on lattice sites imposes the commensurability condition $\theta/\ell=a_0$ ($a_0$ being the lattice constant along $X$). Equation~\eqref{eq:exp-factorize} then acts on the lattice sites $(X_m,Y_n)$ as $e^{i\hat{y}/\ell}\psi_{m,n}=e^{in\phi}\psi_{m+1,n}$ and $e^{-i\hat{y}/\ell}\psi_{m,n}=e^{-in\phi}\psi_{m-1,n}$, with the Peierls phase $\phi\equiv b_0/\ell=b_0 a_0/\theta$ ($b_0$ being the lattice constant along $Y$). The exact lattice action of the sine imaginary potential is therefore
\begin{equation}\label{eq:sine-lattice-action}
V(\hat{y})\psi_{m,n}=\tfrac{g}{2}\bigl(e^{in\phi}\psi_{m+1,n}-e^{-in\phi}\psi_{m-1,n}\bigr).
\end{equation}

Combining this exact potential term with the standard discretization of the kinetic energy, Eq.~\eqref{eq:kinetic-lattice} (allowing different lattice constants $a_0$, $b_0$ along $X$ and $Y$), yields the exact lattice Hamiltonian for the sine potential (dropping the constant on-site energy $\varepsilon_0=2t_x+2t_y$):
\begin{equation}\label{eq:sine-full-lattice}
\begin{aligned}
H_{\sin}&=\!\sum_{m,n}\bigl[J_+(n)c_{m,n}^\dagger c_{m+1,n}+J_-(n)c_{m,n}^\dagger c_{m-1,n}\\
&-t_y(c_{m,n}^\dagger c_{m,n+1}+\mathrm{h.c.})\bigr],
\end{aligned}
\end{equation}
where the $X$-direction hopping amplitudes are $J_\pm(n)\equiv-t_x\pm\tfrac{g}{2}e^{\pm in\phi}$, with $t_x=1/a_0^2$ and $t_y=1/b_0^2$ the hopping integrals along $X$ and $Y$, respectively. Equation~\eqref{eq:sine-full-lattice} shares the same structure as the generic lattice Hamiltonian Eq.~\eqref{eq:lattice-general}, with the NCSE driven entirely by the non-reciprocal hopping along $X$, $J_+(n)-J_-(n)=g\cos(n\phi)$. The difference is that Eq.~\eqref{eq:sine-full-lattice} is exact: under the commensurability condition, the exponential operators $e^{\pm i\theta\hat{P}_X/\ell}$ equal the lattice translation operators exactly, all Bopp orders are automatically resummed, and no truncation error exists.

\subsection{Sine potential: collective direction flip}\label{sec:sine-potential}

Applying the weak-potential criterion Eq.~\eqref{eq:kappa-weak} to the sine potential, with $U'=g\cos(Y/\ell)/\ell$,
\begin{equation}\label{eq:kappa-sine-derivation}
\kappa_\nu=\frac{\theta}{2\ell}\,g\,\underbrace{\frac{2}{\Ly}\int_0^{\Ly}\sin^2\!\left(\frac{\nu\pi Y}{\Ly}\right)\cos\!\left(\frac{Y}{\ell}\right)dY}_{\displaystyle\equiv\,I_\nu(\omega)},
\end{equation}
where we use the commensurability condition $\theta/\ell=a_0$. The matrix element $I_\nu(\omega)$ can be evaluated exactly via product-to-sum identities, giving $I_\nu(\omega)=4\nu^2\pi^2\sin\omega/[\omega(4\nu^2\pi^2-\omega^2)]$, and hence
\begin{equation}\label{eq:kappa-sine}
\kappa_\nu=\frac{2ga_0\nu^2\pi^2\sin\omega}{\omega(4\nu^2\pi^2-\omega^2)},\qquad\omega\equiv\frac{\Ly}{\ell}.
\end{equation}
For $\omega<2\pi$, the sign of $\kappa_\nu$ for every mode is fixed solely by $\sin\omega$. The denominator $4\nu^2\pi^2-\omega^2$ vanishes at $\omega=2\nu\pi$, but this singularity is removable [$I_\nu(2\nu\pi)=-1/2$; see Appendix~\ref{app:sine-matrix}], and $\kappa_\nu$ is everywhere finite and continuous as a function of $\omega$. The reflection-symmetry condition $V(\Ly-Z)=V(Z)$ is equivalent to $\omega=(2k+1)\pi$, at which $\sin\omega=0$ and $\kappa_\nu=0$, consistent with Theorem~\ref{thm:symmetry}. The geometric meaning of this condition is immediately visible in the sliding-window picture: the mirror axes of $\sin(Z/\ell)$ sit at its extrema $Z=(m+\tfrac{1}{2})\pi\ell$, while the box center sits at $\Ly/2=\omega\ell/2$; the two coincide if and only if $\omega=(2k+1)\pi$ [Fig.~\ref{fig:mechanism}(c)]. Tuning $\omega$ (through $\Ly$ or $\theta$) sweeps the mirror axis of the potential landscape through the box center: at each crossing, the gain imbalance between the $\pm s$ windows changes sign and the skin direction of all modes flips collectively; this is the fundamental reason why the flip point is a purely geometric quantity, independent of the potential strength $g$ [the mirror axes of the cosine potential sit at $Z=m\pi\ell$, corresponding to $\omega=2k\pi$; see Sec.~\ref{sec:cosine-potential}]. Near the first symmetry point ($\omega=\pi+\delta$, $|\delta|\ll 1$):
\begin{equation}\label{eq:direction-flip}
\kappa_\nu\approx-\frac{2ga_0\nu^2}{(4\nu^2-1)\pi}\,\delta,\qquad\delta=\frac{\Ly}{\ell}-\pi.
\end{equation}
The skin direction of every mode is fixed solely by the sign of $\delta$ and reverses at $\omega=\pi$: for $g>0$, $\delta>0$ ($\omega>\pi$) corresponds to $\kappa_\nu<0$ (localization at the left end), while $\delta<0$ ($\omega<\pi$) corresponds to $\kappa_\nu>0$ (localization at the right end). Beyond the flip itself, the parameter dependence of Eq.~\eqref{eq:kappa-sine} reveals an important difference between periodic and polynomial imaginary potentials regarding the visibility of the skin effect. For $\omega\gg 2\pi$, $|\kappa_\nu|\sim ga_0\nu^2\pi^2/\omega^3$: the cubic growth of the denominator strongly suppresses the skin effect. A large $\omega$ means the sine potential completes many full periods on $[0,\Ly]$, and the alternating-sign contributions of the potential gradient $U'(Y)\propto\cos(Y/\ell)$ largely cancel in the wave-function-weighted average $\langle\nu|U'|\nu\rangle$. Producing macroscopically observable accumulation ($|\kappa_\nu|\Lx\gtrsim 1$) requires $\omega\sim O(1)$ (i.e., $\theta\sim\Ly$), where the sine potential completes less than one period over the interval and the cancellation no longer operates. By contrast, polynomial imaginary potentials exhibit $\kappa_\nu\propto g\theta$, growing linearly without geometric suppression (e.g., $\kappa=g\theta/2$ for the linear potential and $\kappa\propto g\theta\Ly^2$ for the cubic potential), so that even $\theta\sim 1$ produces a pronounced skin effect; this difference originates from the absence of sign-alternating oscillatory structure in polynomial potential gradients on $[0,\Ly]$.

Figures~\ref{fig:periodic-verification}(a)--\ref{fig:periodic-verification}(c) present the exact-diagonalization verification of these predictions on the lattice. In (a), varying $N_y$ steers $\omega=(N_y{+}1)/\theta$ across the symmetry point $\pi$, and the density flips from the right end to the left end. In contrast, (b) demonstrates the complementary flip path by varying $\theta$, with $\theta_c{=}(N_y{+}1)/\pi$ the symmetry point. The band-resolved $\kappa_\nu$ in (c) confirms quantitatively that all bands share the same sign of $\kappa_\nu$ (collective skin localization), fixed solely by $\sin\omega$.

The exact lattice mapping above relies on the commensurability condition $\theta/\ell=a_0$. Away from commensurability ($\theta/\ell=a_0+\delta_a$), the lattice translation $e^{\theta\partial_X/\ell}$ no longer lands exactly on integer lattice sites, and Eq.~\eqref{eq:exp-factorize} requires an interpolation approximation. For small deviations $|\delta_a|\ll a_0$, the non-reciprocal hopping amplitude acquires a relative correction of $O(\delta_a/a_0)$, and the flip point shifts from $\omega=(2k+1)\pi$ to $\omega=(2k+1)\pi+O(\delta_a N_y)$. The zero-crossing scheme is therefore intrinsically robust against uncertainties in the imaginary-potential strength $g$ (the flip point does not depend on $g$) but linearly sensitive to the commensurability deviation $\delta_a$. On synthetic-dimension platforms, $\theta/\ell$ corresponds to an engineered ratio of lattice hopping parameters, whose control accuracy (typically better than $1\%$~\cite{Liang2022,Wang2021winding}) far exceeds the calibration accuracy of the gain-loss strength $g$, so the core advantage of the zero-crossing scheme survives in realistic experiments.

Among the existing direction-reversal mechanisms of the NHSE, Li \emph{et al.}~\cite{Li2022reversal} showed that coherent coupling between two non-Hermitian chains can reverse the skin direction of all modes; Yuce and Ramezani~\cite{Yuce2024bipolar} found that bipolar skin states generated by long-range asymmetric coupling reverse through a topological phase transition; and Yang and Lee~\cite{Yang2026transverseSwitch} demonstrated that transverse boundary conditions can act as a nonlocal switch controlling longitudinal skin accumulation. Although the last mechanism bears a superficial structural resemblance to the NCSE in that a transverse degree of freedom influences longitudinal skin localization, its cross-direction control still requires engineered non-Hermitian lattice parameters, with the flip point depending on the specific non-Hermitian coupling strengths. The collective flip of the NCSE, by contrast, is fixed entirely by the geometric condition $\omega=(2k+1)\pi$, independent of the imaginary-potential strength $g$, a distinction rooted in the fact that the non-reciprocity itself is a corollary of the algebra $[\hat{x},\hat{y}]=i\theta$ rather than an independent input.
\subsection{Cosine potential: absence of collective flip}\label{sec:cosine-potential}

The inverse skin length of the cosine imaginary potential $V{=}ig\cos(\hat{y}/\ell)$ is given by Eq.~\eqref{eq:kappa-cosine}; the essential difference is that the numerator factor $(1{-}\cos\omega)\geq 0$ is non-negative, touching zero quadratically at the symmetry points $\omega{=}2k\pi$ without changing sign. In Figs.~\ref{fig:periodic-verification}(d) and \ref{fig:periodic-verification}(e), $\omega$ crosses $2\pi$ yet the direction of the total density does not flip, in direct contrast with the collective flip of the sine potential in Figs.~\ref{fig:periodic-verification}(a) and \ref{fig:periodic-verification}(b) as $\omega$ crosses $\pi$. The band-resolved quantitative verification of $\kappa_\nu$ in Fig.~\ref{fig:periodic-verification}(f) agrees with Theorem~\ref{thm:NHSE}. The contrast between the sine and cosine potentials further underscores the symmetry-protected nature of the direction flip.

\section{Summary and Discussion}\label{sec:discussion}

Coordinate noncommutativity $[\hat{x},\hat{y}]=i\theta$ converts, through the Bopp shift, an imaginary potential along $\hat{y}$ into an imaginary gauge field along $X$, so that the non-reciprocity along the skin direction is uniquely fixed by the noncommutative algebra and the $\hat{y}$-direction imaginary potential rather than introduced independently, giving rise to the noncommutative skin effect (NCSE). Its physical core is a momentum-resolved translation of the potential window: the noncommutativity converts the momentum $k$ into a displacement $\theta k$ of the transverse Dirichlet box along the potential landscape, the OBC render the displaced windows unitarily inequivalent, and the mirror asymmetry of the landscape then turns the difference between the $\pm k$ windows into direction-dependent gain (Fig.~\ref{fig:mechanism}). Theorem~\ref{thm:NHSE} provides the inverse skin length $\kappa_\nu=\theta\mathrm{Re}[\alpha_\nu]/2$, exact in the thermodynamic limit; Theorem~\ref{thm:symmetry} establishes the reflection symmetry of the potential, $V(\Ly-Z)=V(Z)$, as an exact criterion for the existence of the NCSE (rigorously valid for finite systems); the two coincide at $\kappa_\nu=0$. The sine imaginary potential flips the skin direction collectively at $\omega\equiv\Ly/\ell=(2k+1)\pi$; the cosine potential, whose factor $(1-\cos\omega)\geq 0$ never changes sign, lacks this effect. The key differences between the NCSE and existing NHSE mechanisms are summarized item by item in Table~\ref{tab:NCSE-vs-HN}.

At $\theta=0$ the cross-direction transmission channel closes, and the localization behavior along $x$ disappears no matter how strong the imaginary potential along $y$ is. It is worth noting that what drives the NCSE is the symplectic structure $\theta$ of the coordinate algebra, not the metric: on a general oblique lattice, $[\hat{x}_1,\hat{x}_2]=i\theta$ produces the imaginary gauge field all the same, and orthogonal Cartesian coordinates are merely the most economical choice for the analysis, just as a Peierls flux is determined by the unit-cell area rather than by its angles.

\begin{table}[h]
\centering
\caption{Key differences between the NCSE and conventional NHSE mechanisms, including asymmetric hopping~\cite{Hatano1996}, gain/loss~\cite{yi_NonHermitianSkinModes_2020}, geometry-dependent skin effect~\cite{Wang2023GDSE}, spatially inhomogeneous imaginary potentials~\cite{Wei2025}, and direction reversal~\cite{Li2022reversal,Yuce2024bipolar,Yang2026transverseSwitch}.}
\label{tab:NCSE-vs-HN}
\footnotesize
\begin{ruledtabular}
\begin{tabular}{lll}
Feature & Conventional NHSE & NCSE (this work) \\
\hline
Asymmetry origin & independent input & $V(\hat y)$ via $\theta$ \\
Mechanism & skin-axis non-Hermiticity & $k$-sliding window \\
Minimal dimension & one & two (intrinsic) \\
Skin strength & skin-axis parameters & transverse bands \\
Flip point & parameter dependent & purely geometric \\
$\theta{=}0$ limit & unaffected & vanishes \\
\end{tabular}
\end{ruledtabular}
\end{table}

\subsection{NCSE as a probe of spatial noncommutativity}\label{sec:sensitivity}

The non-reciprocity of the NCSE is a mathematical necessity of any faithful representation of the algebra~\eqref{eq:commutation} (Stone--von Neumann uniqueness; Theorem~\ref{thm:no-go}), which distinguishes it, at the algebraic level, from the noncommutativity of internal degrees of freedom (such as non-Abelian gauge structures generating non-reciprocity through the noncommutativity of matrices). In recent years, exceptional-point-enhanced response~\cite{Wiersig2020review} and non-Hermitian topological sensors~\cite{Budich2020,Koch2022} have become active research directions, but in all of them the non-reciprocity is introduced by engineered parameters; the relation $\kappa\propto\theta$ of the NCSE means, instead, that the skin strength directly encodes the algebra of the underlying space.

Compared with conventional $\theta$ detection based on perturbative spectral corrections~\cite{Gamboa2001}, the advantage of the NCSE lies in a qualitative transition rather than a quantitative correction. Taking weak-potential parameters as an example ($g=3\times10^{-3}$, $\theta=0.4$, $N_x=500$): the correction to the real part of the spectrum is $\sim 10^{-5}$, whereas $|\kappa_1|N_x\approx 1$ already renders the NCSE macroscopically visible. By Theorem~\ref{thm:symmetry}, as a binary detector there is no in-principle lower bound on the detectable $\theta$ in the thermodynamic limit; the visibility threshold for a finite system is $|\kappa|N_x\gtrsim 1$. The zero-crossing scheme based on the sine potential (with the flip point $\theta_c$ independent of $g$) is intrinsically robust against systematic errors. Similar cross-coupling structures appear in more general deformed algebras~\cite{Lukierski1991}; extending the NCSE mechanism to $\kappa$-Minkowski spacetime is a direction worth exploring.

\subsection{Physical realization}
Experimental verification of the NCSE can be understood at two levels. \emph{At the first level} (effective lattice models): the Peierls-phase structure of the lattice Hamiltonian Eq.~\eqref{eq:sine-full-lattice} can be engineered on platforms such as photonic ring resonators~\cite{Wang2021winding} and topolectrical circuits~\cite{helbig_GeneralizedBulkBoundary_2020} without realizing coordinate noncommutativity, allowing direct tests of the direction-flip and symmetry-protection predictions. \emph{At the second level} (as a probe of noncommutative space): this requires the coexistence of coordinate noncommutativity (R1), second-order kinetic energy (R2), and an imaginary potential (R3). The most direct candidates (Landau-level guiding centers~\cite{Ezawa2013} and a phase-space reinterpretation) each face structural obstructions (Appendix~\ref{app:realization}).

Synthetic-dimension schemes~\cite{Ozawa2019synthetic} are free of these obstructions: cold-atom~\cite{Liang2022,Zhao2025}, photonic~\cite{Wang2021winding,Zheng2024}, and circuit~\cite{Zhang2024CLM} platforms already possess the ability to control hopping parameters and gain/loss independently. The central open problem is the synthetic-dimension realization of coordinate noncommutativity: the closest existing scheme is the non-Abelian gauge field in a synthetic frequency dimension~\cite{Cheng2025} [SU(2)], which has not yet directly realized the coordinate algebra $[\hat{x},\hat{y}]=i\theta$; introducing noncommutativity through a synthetic magnetic flux, on the other hand, would bring back the difficulties associated with Landau quantization~\cite{Lu2021} (Appendix~\ref{app:LLL}). Realizing coordinate noncommutativity without introducing an effective magnetic field, while preserving quadratic dispersion and a controllable imaginary potential, is a technical challenge that calls for new schemes.

\appendix

\section{Bopp transformation and star product}\label{app:Bopp-star}
The symmetric Bopp shift reads
\begin{equation}\label{eq:sym-Bopp}
\begin{split}
\hat{x}=\hat{X}'-\frac{\theta}{2}\hat{P}_{Y'},&\quad \hat{y}=\hat{Y}'+\frac{\theta}{2}\hat{P}_{X'},\\
\hat{p}_x=\hat{P}_{X'},&\quad \hat{p}_y=\hat{P}_{Y'}.
\end{split}
\end{equation}
The asymmetric Bopp shift is Eq.~\eqref{eq:Bopp} of the main text: $\hat{x}=\hat{X}$, $\hat{y}=\hat{Y}+\theta\hat{P}_X$, $\hat{p}_x=\hat{P}_X$, $\hat{p}_y=\hat{P}_Y$. The key difference between the two forms is the following: in the symmetric form, $\hat{x}=\hat{X}'-(\theta/2)\hat{P}_{Y'}$ mixes coordinate and momentum, the Bloch wave vector $k_x$ is no longer the physical quasimomentum, and the potential is shifted by $\theta k_x/2$; in the asymmetric form, $\hat{x}=\hat{X}$ corresponds directly to the physical coordinate, $k_x$ is the physical quasimomentum, and the potential shift is $\theta k_x$. We adopt the asymmetric form so that the wave-function profile along $X$ directly reflects the physical distribution along $\hat{x}$, rendering the analysis of the skin effect most transparent.

An equivalent formulation of noncommutative space is the Moyal--Weyl star product~\cite{Szabo2003}: the coordinates are kept as commuting real numbers $(x,y)$, but ordinary multiplication is replaced by the star product. For $[\hat{x},\hat{y}]=i\theta$, the star product is defined as
\begin{equation}\label{eq:star-def}
(f\star g)(x,y)=f\,\exp\!\left[\frac{i\theta}{2}\!\left(\overleftarrow{\partial}_x\overrightarrow{\partial}_y-\overleftarrow{\partial}_y\overrightarrow{\partial}_x\right)\right]g\,,
\end{equation}
where $\overleftarrow{\partial}$ acts to the left on $f$ and $\overrightarrow{\partial}$ acts to the right on $g$. Direct verification gives $x\star y-y\star x=i\theta$, reproducing the noncommutative algebra.

Let $V=V(y)$ depend on $y$ only. Since $\partial_x^{n-k}V=0$ for $n-k\geq 1$, only the $k=n$ terms survive in the series expansion of the star product:
\begin{equation}\label{eq:star-Bopp-equiv}
(V\star\psi)(x,y)=\sum_{n=0}^{\infty}\frac{1}{n!}\!\left(-\frac{i\theta}{2}\right)^{\!n}V^{(n)}(y)\,\partial_x^n\psi.
\end{equation}
This is precisely the Taylor expansion of $V(y-\frac{i\theta}{2}\partial_x)\psi$ under the symmetric Bopp shift. The asymmetric Bopp expansion carries coefficients $(-i\theta)^n$, differing from Eq.~\eqref{eq:star-Bopp-equiv} by a factor of $2^{-n}$, which corresponds to a different parametrization of the coordinates; the final physical results are equivalent.

\section{Representation-theoretic uniqueness}\label{app:dim-obstruction}
In the main text, the Bopp shift [Eq.~\eqref{eq:Bopp}] maps the noncommutative algebra onto two pairs of standard canonical variables $(\hat{X},\hat{P}_X)$, $(\hat{Y},\hat{P}_Y)$, with Hilbert space $L^2(\mathbb{R}^2)$. This appendix settles two fundamental questions. (a)~\emph{Dimensional obstruction}: why does a single canonical pair [i.e., $L^2(\mathbb{R})$] not suffice? Theorem~\ref{thm:no-go} below provides a rigorous proof. (b)~\emph{Representation uniqueness}: does the specific parametrization of the Bopp shift affect the physical results? The answer is no, as guaranteed by the Stone--von Neumann theorem~\cite{Gouba2009,Scholtz2009}.

\begin{theorem}[Dimensional obstruction]\label{thm:no-go}
The algebra $[\hat{x},\hat{y}]=i\theta$, $[\hat{x},\hat{p}_x]=[\hat{y},\hat{p}_y]=i$, $[\hat{p}_x,\hat{p}_y]=[\hat{x},\hat{p}_y]=[\hat{y},\hat{p}_x]=0$ admits no faithful representation on a single $L^2(\mathbb{R})$ (or any finite-dimensional truncation thereof).
\end{theorem}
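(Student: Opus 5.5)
The plan is to argue in two parts, first disposing of the finite-dimensional truncations and then treating $L^2(\mathbb{R})$ itself.

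\emph{Finite-dimensional truncations.} Here a trace argument suffices. In any finite-dimensional representation the trace of every commutator vanishes. But $\mathrm{Tr}[\hat{x},\hat{p}_x]=i\,\mathrm{Tr}\,\mathbf{1}\neq 0$, so no canonical relation, and in particular none of the relations $[\hat{x},\hat{y}]=i\theta$ with $\theta\neq0$, can hold with a nonzero scalar on the right. Hence no faithful finite-dimensional representation exists, regardless of $\theta$.

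\emph{Setup on $L^2(\mathbb{R})$.} I will make precise what is meant by a representation on $L^2(\mathbb{R})$: an irreducible representation of the Heisenberg relations by self-adjoint operators satisfying the Weyl (exponentiated) form. This is the natural setting, since the question concerns a single canonical pair. The key observation is a change of generators to decouple the algebra into two commuting Heisenberg pairs. Set
\begin{equation*}
\hat{Q}_1=\hat{x},\quad \hat{\Pi}_1=\hat{p}_x,\quad \hat{Q}_2=\hat{y}-\theta\hat{p}_x,\quad \hat{\Pi}_2=\hat{p}_y .
\end{equation*}
A direct check using the stated relations gives
\begin{equation*}
[\hat{Q}_2,\hat{Q}_1]=-i\theta+i\theta=0,\qquad [\hat{Q}_2,\hat{\Pi}_1]=0,\qquad [\hat{Q}_2,\hat{\Pi}_2]=i .
\end{equation*}
Since all remaining cross commutators also vanish, the algebra is isomorphic to the Heisenberg algebra with \emph{two} degrees of freedom. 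This is just the inverse of the Bopp shift in Eq.~\eqref{eq:Bopp}.

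\emph{Ruling out $L^2(\mathbb{R})$.} Faithfulness requires both pairs $(\hat{Q}_j,\hat{\Pi}_j)$ to act nontrivially with the canonical relation. By the Stone--von Neumann theorem, any representation of the two-mode Weyl relations is unitarily equivalent to a direct sum of copies of the Schr\"odinger representation on $L^2(\mathbb{R}^2)\cong L^2(\mathbb{R})\otimes L^2(\mathbb{R})$. A single $L^2(\mathbb{R})$ carrying an irreducible one-mode representation cannot accommodate this. To make the argument concrete rather than dimension-counting, I will use the following route. Suppose $(\hat{Q}_1,\hat{\Pi}_1)$ acts irreducibly on $L^2(\mathbb{R})$. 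Then $\hat{Q}_2$ and $\hat{\Pi}_2$ commute with the Weyl operators of the first pair, so by Schur's lemma (irreducibility) they are scalars. A scalar cannot satisfy $[\hat{Q}_2,\hat{\Pi}_2]=i$, which is a contradiction. If instead the first pair acts reducibly, Stone--von Neumann gives $L^2(\mathbb{R})\cong L^2(\mathbb{R})\otimes\mathcal{K}$ with the second pair acting on the multiplicity space $\mathcal{K}$. The requirement that the whole algebra be realized by a single canonical pair (i.e.\ the representation is the one generated by one pair) then excludes this case. This case is where I expect the main subtlety, namely making ``single $L^2(\mathbb{R})$'' precise so that abstract isomorphisms $L^2(\mathbb{R})\cong L^2(\mathbb{R}^2)$ as Hilbert spaces do not trivialize the statement. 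I would formalize the claim as: there is no faithful representation on which the operators are generated by one irreducible canonical pair. The Schur-lemma argument then closes it.

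Finally, I will remark that the same decoupling shows the minimal faithful representation is $L^2(\mathbb{R}^2)$. It is unique up to unitary equivalence, so every Bopp-type parametrization, whether symmetric or asymmetric, yields the same physics.
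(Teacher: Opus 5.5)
Your proof is correct, but it reaches the contradiction by a different route than the paper. The paper takes $(\hat{x},\hat{y})$ itself to be the Schr\"odinger pair on $L^2(\mathbb{R})$, i.e.\ $\hat{x}=x$, $\hat{y}=-i\theta\partial_x$. It then solves the commutant conditions by hand: $[\hat{y},\hat{p}_x]=0$ and $[\hat{x},\hat{p}_x]=i$ force $\hat{p}_x=\hat{y}/\theta+c_1$, likewise $\hat{p}_y=-\hat{x}/\theta+c_2$, and then $[\hat{p}_x,\hat{p}_y]=i/\theta\neq0$. You instead use the inverse Bopp shift to show the algebra is the two-mode Heisenberg algebra. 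You then place the irreducibility hypothesis on $(\hat{x},\hat{p}_x)$ and let Schur's lemma force the second mode to be scalar.

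The two hypotheses differ, but both are cases of one fact. Whatever canonical pair built from the generators is assumed irreducible, its commutant contains a complementary pair with nonzero commutator. In the paper's case these are $\hat{p}_x-\hat{y}/\theta$ and $\hat{p}_y+\hat{x}/\theta$. Both commute with $\hat{x}$ and $\hat{y}$, yet $[\hat{p}_x-\hat{y}/\theta,\,\hat{p}_y+\hat{x}/\theta]=-i/\theta$, so Schur gives the paper's conclusion in one line.

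The paper's version is more elementary, but its ``only choice'' step silently invokes Stone--von Neumann irreducibility for $(\hat{x},\hat{y})$. Its proof also does not treat the finite-dimensional clause of the statement. Your trace argument supplies that clause cleanly. Your decoupling makes the structural reason, and the link to the paper's subsequent uniqueness remark, transparent. You also correctly note that some irreducibility hypothesis is needed: since $L^2(\mathbb{R})\cong L^2(\mathbb{R}^2)$ as Hilbert spaces, the Schr\"odinger representation can simply be transported to $L^2(\mathbb{R})$.

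One wording fix: state that hypothesis as ``the pair $(\hat{x},\hat{p}_x)$ (or any canonical pair formed from the generators) acts irreducibly,'' not ``the operators are generated by one irreducible canonical pair.'' Any irreducible pair generates all bounded operators, so the latter phrasing would not exclude the transported representation.
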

\begin{proof}
Suppose a faithful representation existed on $L^2(\mathbb{R}_x)$, with $\hat{x}=x$ and $\hat{y}=-i\theta\partial_x$ (the only choice satisfying $[\hat{x},\hat{y}]=i\theta$). The condition $[\hat{y},\hat{p}_x]=0$ forces $\hat{p}_x$ to commute with $\partial_x$, hence $\hat{p}_x=f(-i\partial_x)$. Then $[\hat{x},\hat{p}_x]=i$ gives $f'=1$, so $\hat{p}_x=\hat{y}/\theta+c_1$. Similarly, $[\hat{x},\hat{p}_y]=0$ together with $[\hat{y},\hat{p}_y]=i$ yields $\hat{p}_y=-\hat{x}/\theta+c_2$. However, $[\hat{p}_x,\hat{p}_y]=-[\hat{y},\hat{x}]/\theta^2=i/\theta\neq 0$, contradicting the required $[\hat{p}_x,\hat{p}_y]=0$.
\end{proof}
The root of this obstruction is that the algebra Eq.~\eqref{eq:commutation} has four independent generators $(\hat{x},\hat{y},\hat{p}_x,\hat{p}_y)$, whose commutation relations define a step-two nilpotent Lie group with a three-dimensional center (corresponding to the three central elements $\theta$, $\hbar$, and $[\hat{p}_x,\hat{p}_y]=0$)~\cite{Gouba2009,Scholtz2009}. A single canonical pair provides only two generators, insufficient for a faithful representation of four independent operators. The minimal faithful representation therefore requires two canonical pairs $(\hat{X},\hat{P}_X)$ and $(\hat{Y},\hat{P}_Y)$, with Hilbert space $L^2(\mathbb{R}^2)$.

\textbf{Limitation of the fuzzy-plane construction}. The noncommutative relation $[\hat{x},\hat{y}]=i\theta$ is isomorphic to the harmonic-oscillator algebra $[\hat{a},\hat{a}^\dagger]=1$, and truncating the Fock space $\{|n\rangle\}_{n=0}^{N-1}$ yields finite-dimensional matrix representations of $\hat{x}$ and $\hat{y}$, namely the fuzzy plane~\cite{Madore1999} (this construction generalizes the fuzzy sphere~\cite{Madore1992} in the flat limit). However, this represents only the two operators $\hat{x}$ and $\hat{y}$. The Hamiltonian Eq.~\eqref{eq:H-original} of this work also contains the momentum operators $\hat{p}_x$, $\hat{p}_y$, which, by Theorem~\ref{thm:no-go}, cannot be consistently defined on the same Fock space: the algebra forces $\hat{p}_x$ and $\hat{p}_y$ to become functions of $\hat{x}$ and $\hat{y}$, thereby violating $[\hat{p}_x,\hat{p}_y]=0$. To represent all four operators simultaneously, one must enlarge the space to the tensor product $\mathcal{H}=\text{span}\{|n_1\rangle\otimes|n_2\rangle\}$~\cite{Scholtz2009}, at which point the Bopp shift inevitably appears.

\textbf{Uniqueness of the Bopp representation}. The Bopp shift [Eq.~\eqref{eq:Bopp}] provides an explicit isomorphism from the algebra [Eq.~\eqref{eq:commutation}] to the Weyl algebra of two standard canonical pairs $(\hat{X},\hat{P}_X)$, $(\hat{Y},\hat{P}_Y)$. Let $\varphi$ be any faithful irreducible representation of Eq.~\eqref{eq:commutation}; through this isomorphism, $\varphi$ induces an irreducible representation of the standard Weyl algebra, which by the Stone--von Neumann theorem~\cite{Gouba2009,Scholtz2009} is unitarily equivalent to the Schr\"odinger representation on $L^2(\mathbb{R}^2)$~\cite{Gouba2009}. Scholtz \emph{et al.}~\cite{Scholtz2009} reached the same conclusion from the perspective of the Hilbert--Schmidt operator space (i.e., the set of operators $A$ on $\mathcal{H}_c$ with $\mathrm{Tr}(A^\dagger A)<\infty$, where $\mathcal{H}_c$ is the Fock-space carrier of $[\hat{x},\hat{y}]=i\theta$): this space is naturally isomorphic to $\mathcal{H}_c\otimes\mathcal{H}_c^*\cong L^2(\mathbb{R}^2)$. Corollary: all faithful irreducible representations of Eq.~\eqref{eq:commutation} are unitarily equivalent; whether one adopts the fuzzy tensor-product basis~\cite{Madore1999}, a spatial-grid discretization, or star-product quantization~\cite{Szabo2003}, the resulting spectra are identical. The specific parametrization of the Bopp shift (symmetric or asymmetric) is a coordinate choice within the unitary equivalence class and does not affect physical results.

\section{Proofs and Derivations}\label{app:proofs}

\subsection{Non-Hermitian Hellmann-Feynman theorem and $\alpha_\nu$}\label{app:alpha-derivation}

This section derives Proposition~\ref{prop:alpha}. The central tool is the non-Hermitian Hellmann--Feynman theorem (NH-HFT)~\cite{Moiseyev2011,Hajong2023,Hajong2025}. For a non-Hermitian operator $H(\lambda)$ depending on a parameter $\lambda$, let its right and left eigenstates be $|\phi_\nu^R(\lambda)\rangle$ and $\langle\phi_\nu^L(\lambda)|$, satisfying $H|\phi_\nu^R\rangle=\epsilon_\nu|\phi_\nu^R\rangle$ and $\langle\phi_\nu^L|H=\epsilon_\nu\langle\phi_\nu^L|$. Under the nondegeneracy condition, the NH-HFT states that
\begin{equation}\label{eq:NH-HFT}
\frac{d\epsilon_\nu}{d\lambda}=\frac{\langle\phi_\nu^L|\frac{\partial H}{\partial\lambda}|\phi_\nu^R\rangle}{\langle\phi_\nu^L|\phi_\nu^R\rangle},
\end{equation}
where the denominator $\langle\phi_\nu^L|\phi_\nu^R\rangle$ is the biorthogonal normalization factor. The proof parallels the Hermitian case: differentiate $H|\phi_\nu^R\rangle=\epsilon_\nu|\phi_\nu^R\rangle$ with respect to $\lambda$, multiply from the left by $\langle\phi_\nu^L|$, and use $\langle\phi_\nu^L|H=\epsilon_\nu\langle\phi_\nu^L|$ to eliminate the $\partial_\lambda|\phi_\nu^R\rangle$ term. We apply the NH-HFT to $h(s)=-\frac{d^2}{dY^2}+V(Y+s)$, with the parameter $\lambda=s$. According to Eq.~\eqref{eq:h-def}, $\partial_s h|_{s=0}=V'(Y)=iU'(Y)$.

Let $h(0)=T+iU(Y)$ with $T=-\frac{d^2}{dY^2}$. Taking the complex conjugate of $h(0)\phi_\nu^R=\epnu\phi_\nu^R$ and using $T^*=T$, one finds $[T-iU](\phi_\nu^R)^*=\epnu^*(\phi_\nu^R)^*$. Since $h(0)^\dagger=T-iU$, the left eigenstate is $\phi_\nu^L\propto(\phi_\nu^R)^*$. Choosing the overall phase of $\phi_\nu^R$ such that $\int[\phi_\nu^R]^2\,dY>0$ (the nondegeneracy condition guarantees that this integral is nonzero~\cite{Kato1966}), the biorthogonal normalization $\langle\phi_\nu^L|\phi_\nu^R\rangle=1$ fixes the normalization constant
\begin{equation}\label{eq:c-nu}
c_\nu\equiv\frac{1}{\int_0^{\Ly}[\phi_\nu^R(Y)]^2\,dY}\in\mathbb{R}_{>0},
\end{equation}
i.e., $\phi_\nu^L(Y)=c_\nu\,[\phi_\nu^R(Y)]^*$. Substituting these results into the NH-HFT~\eqref{eq:NH-HFT}:
\begin{equation}
\begin{aligned}
\epnu'(0)=\langle\phi_\nu^L|iU'|\phi_\nu^R\rangle&=c_\nu\int_0^{\Ly}[\phi_\nu^R(Y)]^2\,iU'(Y)\,dY\\
&=i\,\frac{\int[\phi_\nu^R]^2 U'\,dY}{\int[\phi_\nu^R]^2\,dY}=i\alpha_\nu.
\end{aligned}
\end{equation}

\subsection{Weak-potential condition}\label{app:weak-potential}

The energy levels of the free particle in a box along $Y$ ($V=0$) are $\epsilon_\nu^{(0)}=(\nu\pi/\Ly)^2$ (taking $\hbar=1$, $m=1/2$), with the lowest level spacing $\Delta\epsilon_Y\equiv\epsilon_2^{(0)}-\epsilon_1^{(0)}=3\pi^2/\Ly^2$. The weak-potential condition requires the perturbation of the $Y$-direction wave functions by the imaginary potential to be sufficiently small that $\phi_\nu^R\approx\phi_\nu^{(0)}$ (a real function). Nondegenerate perturbation theory gives the first-order correction
\begin{equation}
\phi_\nu^{(1)}=\sum_{\mu\neq\nu}\frac{\langle\mu|V|\nu\rangle}{\epsilon_\nu^{(0)}-\epsilon_\mu^{(0)}}\phi_\mu^{(0)},
\end{equation}
whose norm satisfies $\|\phi_\nu^{(1)}\|\lesssim\max|U|/\Delta\epsilon_Y$. Convergence of the perturbation series requires
\begin{equation}\label{eq:weak-condition}
\max|U|\ll\Delta\epsilon_Y\sim\pi^2/\Ly^2\,.
\end{equation}
For the specific potentials, this condition translates into upper bounds on the coupling constant $g$ [taking $\Ly=(N_y+1)b_0$]: linear potential, $g\ll 3\pi^2/\Ly^3$; quadratic potential, $g\ll 12\pi^2/\Ly^4$; cubic potential, $g\ll 24\pi^2/\Ly^5$; sine and cosine potentials, $g\ll 3\pi^2/\Ly^2$. For polynomial potentials the constraint tightens as a power of $\Ly$ ($\max|U|$ is attained at the interval boundary), whereas for the sine and cosine potentials $|U|\leq g$ is globally bounded, and the constraint is the most lenient.

\subsection{Adiabatic condition: self-consistency}\label{app:adiabatic}

The adiabatic approximation requires the variation of $\phi_\nu(Y;s)$ over the parameter spacing $\Delta s$ between different roots $\beta_j$ to be negligible. We establish a quantitative condition by estimating the sensitivity of $\phi_\nu$ to the parameter $s$.

Viewing $h(s)=-d^2/dY^2+V(Y+s)$ as an operator family depending on the parameter $s$, first-order non-Hermitian perturbation theory~\cite{Moiseyev2011} gives the parametric derivative of the transverse wave function:
\begin{equation}\label{eq:phi-variation}
\frac{\partial\phi_\nu^R}{\partial s}=\sum_{\mu\neq\nu}\frac{\langle\phi_\mu^L|V'(Y)|\phi_\nu^R\rangle}{\epnu-\epsilon_\mu}\,\phi_\mu^R,
\end{equation}
where $\partial_s h(s)|_{s=0}=V'(Y)$. The matrix elements in the numerator satisfy $|\langle\phi_\mu^L|V'|\phi_\nu^R\rangle|\leq\|V'\|$, and the dominant contribution comes from the nearest band $\mu$, so that
\begin{equation}
\left\|\frac{\partial\phi_\nu^R}{\partial s}\right\|\sim\frac{\|V'\|}{|\epnu-\epsilon_\mu|}\,.
\end{equation}
The parameter displacement required for an $O(1)$ change of $\phi_\nu$ is therefore
\begin{equation}
s_{\mathrm{gap}}\sim\frac{|\epnu-\epsilon_\mu|}{\|V'\|}\,,
\end{equation}
i.e., the ratio of the $Y$-direction band gap to the magnitude of the potential gradient. The larger the gap and the weaker the potential gradient, the larger $s_\text{gap}$ and the more easily the adiabatic condition is satisfied. The OBC quantization gives the spacing of adjacent roots $|\beta_+-\beta_-|=2\pi n/\Lx$, corresponding to the parameter spacing $\Delta s=2\pi n\theta/\Lx$. The adiabatic condition is $\Delta s\ll s_{\mathrm{gap}}$. For a fixed mode index $n$, $\Delta s\propto 1/\Lx$ vanishes in the thermodynamic limit $\Lx\to\infty$, and the condition is satisfied automatically.

For the specific potentials, using the free-particle gap $\Delta\epsilon_Y=3\pi^2/\Ly^2$ as an order-of-magnitude estimate for $|\epnu-\epsilon_\mu|$ (this estimate is accurate when the potential is much smaller than the gap, and yields a conservative parameter range for finite potentials), the adiabatic scales read: linear potential, $s_\mathrm{gap}\sim 3\pi^2/(g\Ly^2)$ (the potential gradient $\|V'\|=g$ is uniform; the adiabatic condition is the easiest to satisfy); cubic potential, $s_\mathrm{gap}\sim 4\pi^2/(g\Ly^4)$ (the potential gradient $\|V'\|\propto\Ly^2$; the constraint is more stringent); sine and cosine potentials, $s_\mathrm{gap}\sim 3\pi^2\ell/(g\Ly^2)$ (the potential gradient $\|V'\|=g/\ell$ is globally bounded; $s_\mathrm{gap}$ depends on $\ell$). The quadratic potential satisfies $V(\Ly-Z)=V(Z)$ and has $\kappa_\nu\equiv 0$, requiring no adiabatic condition.

For the $j$th $X$-direction mode, the adiabatic condition reads $2\pi j\theta/\Lx\ll s_{\mathrm{gap}}$, i.e., the mode index must satisfy $j\ll s_{\mathrm{gap}}\Lx/(2\pi\theta)$. The band-resolved numerical extraction of $\kappa_\nu$ requires this condition to hold for the vast majority of $j$ (not merely $j=1$): projecting the two-dimensional eigenstates onto the transverse basis $\phi_\nu$ must assign every $X$-direction eigenstate to the correct band, and if the adiabatic condition fails for high-$j$ states, the projection mixes different bands and introduces systematic bias into the extracted $\kappa_\nu$. Defining the largest mode index satisfying the adiabatic condition, $j_{\mathrm{safe}}\equiv\lfloor s_{\mathrm{gap}}\Lx/(2\pi\theta)\rfloor$ ($\lfloor\cdot\rfloor$ denotes the floor function), we require $j_{\mathrm{safe}}\gtrsim N_x$ (where $\gtrsim$ means greater than or of comparable order), i.e., all $N_x$ modes fall within the range of validity of the adiabatic condition. We emphasize that this global adiabatic condition is a requirement of the numerical verification protocol, not of the theoretical validity of Theorem~\ref{thm:NHSE}, which holds exactly in the thermodynamic limit for states of fixed mode index $n$. For the cubic potential, this global adiabatic condition combined with the visibility condition $|\kappa_1|N_x\gtrsim 1$ yields the system-size scaling law $N_x\gtrsim 10(N_y-1)^2$.

\subsection{Boundary condition necessity}\label{app:BC-necessity}
Open boundaries along $Y$ are a necessary condition for the dispersion non-reciprocity $\epnu(s)\neq\epnu(-s)$. In this section we discuss in detail two issues: the classification of boundary conditions with respect to the translation operator, and the boundary conditions on the commutative coordinates.

Define the $Y$-direction translation operator $(T_s\phi)(Y)\equiv\phi(Y+s)$. For any $f(Y)$, using $(T_{-s}f)(Y)=f(Y-s)$ and the chain rule $[f(Y-s)]''=f''(Y-s)$, one directly verifies
\begin{equation}
(T_s h(0) T_{-s} f)(Y)=-f''(Y)+V(Y+s)f(Y),
\end{equation}
i.e., $T_s\,h(0)\,T_{-s}=h(s)$ holds for arbitrary potentials $V$ and $s\in\mathbb{R}$. Whether $h(s)$ and $h(0)$ are isospectral depends on the unitarity of $T_s$ on the corresponding function space.

On the unbounded line $L^2(\mathbb{R})$ and under \emph{periodic boundary conditions} $\phi(0)=\phi(\Ly)$, $T_s$ is unitary: in the former case, the change of variables gives $\langle T_sf,T_sg\rangle=\langle f,g\rangle$ directly; in the latter, $T_s$ preserves periodicity, and the integrand integrates to the same value over any interval of length $\Ly$. In both cases $h(s)$ is unitarily equivalent to $h(0)$, $\epnu(s)=\epnu(0)$, and non-reciprocity cannot arise. Under \emph{OBC}, $\phi(0)=\phi(\Ly)=0$, one has $(T_s\phi)(0)=\phi(s)\neq 0$ (generically): $T_s$ maps functions out of the space satisfying this boundary condition, the unitary equivalence is broken, and $\epnu(s)\neq\epnu(-s)$ becomes possible. The properties of $T_s$ under the various boundary conditions are classified in Table~\ref{tab:BC-classification}; the sheared contour bands in Fig.~\ref{fig:mechanism}(a) of the main text are precisely the visualization of this translation structure of $h(s)$ in the $(k,Y)$ plane.

\begin{table}[h]
\centering
\caption{Classification of the properties of the translation operator $T_s$ under different boundary conditions (BCs) along $Y$. ``Preserves BC'' indicates whether $T_s$ maps functions satisfying the BC to functions still satisfying it; ``Unitary'' indicates whether $T_s$ is unitary on the corresponding function space; ``Non-recip.'' indicates whether the dispersion non-reciprocity $\epnu(s)\neq\epnu(-s)$ can arise. Only OBC break the unitary equivalence and make non-reciprocity possible.}
\label{tab:BC-classification}
\footnotesize
\begin{ruledtabular}
\begin{tabular}{lccc}
Boundary condition & Preserves BC & Unitary & Non-recip. \\
\hline
Real line $Y\in\mathbb{R}$ & \checkmark & \checkmark & $\times$ \\
PBC $\phi(0){=}\phi(\Ly)$ & \checkmark & \checkmark & $\times$ \\
Quasiperiodic $\phi(\Ly){=}e^{i\varphi}\phi(0)$ & \checkmark & \checkmark & $\times$ \\
OBC $\phi(0){=}\phi(\Ly){=}0$ & $\times$ & $\times$ & \checkmark \\
\end{tabular}
\end{ruledtabular}
\end{table}

The OBC are imposed on the commutative coordinate $\hat{Y}$, not on the physical coordinate $\hat{y}=\hat{Y}+\theta\hat{P}_X$. This is not a choice but a necessity dictated by the algebraic structure. In the coordinate representation, $\hat{y}=Y-i\theta\partial_X$ contains a differential operator along $X$ and does not correspond to a definite spatial location, so ``vanishing of the wave function at $\hat{y}=0$'' would be the partial-differential constraint $Y\Psi=i\theta\partial_X\Psi$ rather than a geometric boundary condition. By the dimensional-obstruction theorem (Appendix~\ref{app:dim-obstruction}), all faithful irreducible representations of the algebra~\eqref{eq:commutation} are unitarily equivalent to the Bopp representation on $L^2(\mathbb{R}^2)$~\cite{Gouba2009,Scholtz2009}; the wave function $\Psi(X,Y)$ is parametrized by the commutative coordinates $(X,Y)$, and boundary conditions can only be imposed on these coordinates. In synthetic-dimension implementations, the lattice index $n$ associated with $\hat{Y}$ (such as photonic frequency modes or cold-atom momentum ladders) is the natural discrete degree of freedom, and the boundary is implemented precisely as a truncation of that synthetic-dimension index; ``OBC on $\hat{Y}$'' thus corresponds exactly to the natural boundary in experiments, rather than being a mere mathematical convenience.

The global definition of the potential ($V$ must be specified for all $Z\in\mathbb{R}$; see Sec.~\ref{sec:model}) manifests itself even more thoroughly at the lattice level: the lattice model contains only the sites and hoppings within the sample (there is no ``region outside the sample''), and the values of $V$ outside the window are encoded in the $n$-dependent structure of the $X$-direction hoppings. Taking the exact Peierls mapping of the sine potential, Eq.~\eqref{eq:sine-full-lattice}, as an example: for an $X$-direction PBC plane wave $e^{ikm}$, the hopping terms give $J_+(n)e^{ik}+J_-(n)e^{-ik}=-2t_x\cos k+ig\sin(n\phi+k)$; using the commensurability condition $\theta=a_0\ell$ and $\phi=b_0/\ell$, one finds $n\phi+k=(Y_n+s)/\ell$ (with $s=\ell k=\theta k/a_0$), i.e., the hopping structure inside the box automatically reconstructs, in momentum space, the translated potential $U(Y_n+s)$, including its values ``outside the window''. Furthermore, in the continuum theory with OBC along $X$, the roots $\beta_j$ of the non-Bloch superposition Eq.~\eqref{eq:Psi-general} are generally complex, and $\phi_\nu(Y;\theta\beta_j)$ involves complex shifts, strictly requiring $V$ to admit an analytic continuation in a complex neighborhood; the polynomial, sine, and cosine potentials studied in this work are all entire functions, so this requirement is met automatically, while for general potentials the lattice model can be taken as the fundamental definition.

\subsection{Proof of Theorem~\ref{thm:symmetry}}\label{app:symmetry-proof}

This section provides the complete proof of Theorem~\ref{thm:symmetry}.

\textbf{(i) $V(\Ly-Z)=V(Z)$ $\Rightarrow$ $[\R,H]=0$ $\Rightarrow$ no NCSE.}

We first verify the transformation rules of $\R$ on the basic operators. For the momentum operators, taking $\hat{P}_X$ as an example:
\begin{align}
(\R\hat{P}_X\R\psi)(X,Y) &= \R\!\left[-i\partial_X\psi(\Lx-X,\Ly-Y)\right] \notag\\
&= +i\frac{\partial\psi}{\partial u}\Big|_{u=X} = -\hat{P}_X\psi(X,Y),
\end{align}
and similarly $\R\hat{P}_Y\R^{-1}=-\hat{P}_Y$. For the coordinate operators, $\R\hat{X}\R^{-1}=\Lx-\hat{X}$ and $\R\hat{Y}\R^{-1}=\Ly-\hat{Y}$ follow directly from the reflection of the domain by $\R$. For the composite coordinate: $\R\hat{Q}\R^{-1}=(\Ly-\hat{Y})+\theta(-\hat{P}_X)=\Ly-\hat{Q}$.

Since $\hat{P}_X^2$ and $\hat{P}_Y^2$ are invariant under $\R$ (the squares cancel the minus signs), the kinetic term commutes with $\R$. For the potential term: $\R V(\hat{Q})\R^{-1}=V(\R\hat{Q}\R^{-1})=V(\Ly-\hat{Q})$. The symmetry condition $V(\Ly-Z)=V(Z)$ $\forall Z$ guarantees $V(\Ly-\hat{Q})=V(\hat{Q})$, hence $[\R,H]=0$.

Let $H\Psi=E\Psi$. If $E$ is nondegenerate, then $\R\Psi=\lambda\Psi$ with $\lambda=\pm 1$, and therefore $|\Psi(\Lx-X,\Ly-Y)|^2=|\Psi(X,Y)|^2$. Summing over $Y$ yields $\rho(\Lx-X)=\rho(X)$, i.e., the density is centrosymmetric along $X$, which rules out one-sided skin localization. If $E$ is degenerate, $\R$ satisfies $\R^2=\mathbf{1}$ on the degenerate subspace and can be diagonalized with eigenvalues $\pm 1$; each basis vector then satisfies the density symmetry independently.

\textbf{(ii) $V(\Ly-Z)=V(Z)$ $\Rightarrow$ $\epnu(s)=\epnu(-s)$.}

Let $\phi_\nu(Y;s)$ satisfy $h(s)\phi_\nu=\epnu(s)\phi_\nu$ [OBC: $\phi_\nu(0;s)=\phi_\nu(\Ly;s)=0$]. To connect the spectra of $h(s)$ and $h(-s)$, construct the function $\tilde\phi_\nu(Y)\equiv\phi_\nu(\Ly-Y;s)$. The boundary conditions of $\phi_\nu$ directly give $\tilde\phi_\nu(0)=\phi_\nu(\Ly;s)=0$ and $\tilde\phi_\nu(\Ly)=\phi_\nu(0;s)=0$, i.e., $\tilde\phi_\nu$ satisfies the same OBC.

Perform the change of variables $Y\to\Ly-Y$ in $h(s)\phi_\nu=\epnu(s)\phi_\nu$. In the kinetic term, the chain rule produces two minus signs that cancel ($\frac{d^2}{dY^2}[\phi(\Ly-Y)]=\phi''(\Ly-Y)$), so the kinetic term is unchanged. The potential term becomes $V(\Ly-Y+s)$. Using the symmetry condition $V(\Ly-Z)=V(Z)$ (with $Z=Y-s$), one obtains $V(\Ly-Y+s)=V(Y-s)$. Substituting back,
\begin{equation}
    h(-s)\,\tilde\phi_\nu = \epnu(s)\,\tilde\phi_\nu.
\end{equation}
This shows that $\tilde\phi_\nu$ is an eigenstate of $h(-s)$ with eigenvalue $\epnu(s)$. However, this conclusion only states that $\epnu(s)$ belongs to the spectrum of $h(-s)$, i.e., to the set of eigenvalues $\mathrm{spec}[h(-s)]\equiv\{\epsilon_1(-s),\epsilon_2(-s),\ldots\}$; it does not imply that $\epnu(s)$ equals the \emph{$\nu$th} eigenvalue of $h(-s)$. In general, $\epnu(s)=\epsilon_\mu(-s)$ for some label $\mu$, which need not equal $\nu$: although $\tilde\phi_\nu$ is constructed from $\phi_\nu$, it may correspond to any band of $h(-s)$. Establishing $\mu=\nu$ is therefore the crucial step of the proof.

At $s=0$, $h(s)=h(-s)=h(0)$, and the symmetry condition $V(\Ly-Z)=V(Z)$ endows the potential of $h(0)$ with reflection symmetry about $Y_c=\Ly/2$. In this case $\tilde\phi_\nu(Y)=\phi_\nu(\Ly-Y;0)$ is also an eigenstate of $h(0)$ [with eigenvalue $\epnu(0)$]. The nondegeneracy condition guarantees that each eigenvalue of $h(0)$ corresponds to a single eigenstate (up to a proportionality constant), so $\tilde\phi_\nu\propto\phi_\nu$, i.e., $\mu=\nu$ holds at $s=0$.

For $s\neq 0$, this label matching must be extended to the entire real axis. By Kato's analytic perturbation theory~\cite{Kato1966}, under the nondegeneracy condition $\epnu(s)$ is an analytic function of $s$, and each $\nu$ labels a smooth eigenvalue curve. An exchange of level labels can occur only at an exceptional point (EP), i.e., where two eigenvalue curves merge. The theorem assumes that $h(s)$ is nondegenerate along the entire real axis $s\in\mathbb{R}$, excluding the occurrence of EPs. Starting from $s=0$ (where $\mu=\nu$ has been established) and moving continuously along the real axis to arbitrary $s$, the level labels remain $\mu=\nu$ throughout, and finally $\epnu(-s)=\epnu(s)$ holds for all $\nu,s$.

\textbf{(iii) Violation of the symmetry condition $\Rightarrow$ NCSE for generic potentials.}

If $V(\Ly-Z)\neq V(Z)$ for some $Z$, the similarity transformation in (ii) is broken and $\epnu(s)=\epnu(-s)$ loses its protection. More quantitatively, in the weak-potential limit, the simultaneous vanishing of $\mathrm{Re}[\alpha_\nu]=\langle\nu|U'|\nu\rangle$ for all $\nu$ would mean that $U'(Y)$ is orthogonal to all $\sin^2(\nu\pi Y/\Ly)=(1-\cos(2\nu\pi Y/\Ly))/2$: this requires the mean value $a_0$ and all even coefficients $a_{2\nu}$ ($\nu=1,2,\ldots$) of the Fourier-cosine expansion of $U'(Y)$ to vanish, which is precisely equivalent to $U'(Y)$ being antisymmetric about $\Ly/2$, i.e., to the reflection-symmetry condition $U(\Ly-Y)=U(Y)$. Therefore, in the weak-potential limit, $\alpha_\nu=0$ for all $\nu$ holds \emph{if and only if} the potential satisfies the reflection-symmetry condition, and any potential violating the symmetry condition necessarily has some $\nu$ with $\mathrm{Re}[\alpha_\nu]\neq 0$. At general potential strength, $[\phi_\nu^R]^2$ deviates from $|\phi_\nu^{(0)}|^2$ and the exact equivalence above no longer holds strictly, but an analogous Fourier analysis shows that for generic potentials there still exists some $\nu$ with $\mathrm{Re}[\alpha_\nu]\neq 0$.

\subsection{Validity of the adiabatic-Taylor reduction}\label{app:validity}
Both the Taylor truncation and the adiabatic approximation presuppose the nondegeneracy of $h(s)$. Near an exceptional point (EP), two eigenvalues and their eigenstates coalesce simultaneously, the sensitivity of $\phi_\nu$ to $s$ diverges, and the adiabatic condition fails locally; at the same time, the breakdown of nondegeneracy also affects the $\mu=\nu$ argument in Theorem~\ref{thm:symmetry}(ii). The behavior of the NCSE near an EP may deviate from the predictions of the Hatano--Nelson mapping.

Specifically, Theorem~\ref{thm:symmetry}(ii) requires $h(s)$ to be free of EPs along the entire real axis $s\in\mathbb{R}$. If two levels merge at some $s=s_{\mathrm{EP}}$, analytic continuation can no longer exclude a label exchange [$\epnu(s)$ may jump to another branch at the EP], and $\epnu(s)=\epnu(-s)$ may fail. Whether the NCSE remains absent when the symmetry condition is satisfied but EPs are present is an open question.

\section{Effective One-Dimensional Equation and Similarity Transformation}\label{app:exact-1D}
This appendix provides the precise definition of the pseudo-differential operator $\epnu(-i\theta\partial_X)$ and an independent verification of the similarity-transformation solution of Eq.~\eqref{eq:HN}.

The operator $-i\partial_X$ acts on a plane wave $e^{i\beta X}$ as $(-i\partial_X)e^{i\beta X}=\beta\,e^{i\beta X}$. For an analytic function $\epnu(s)$, the pseudo-differential operator $\epnu(-i\theta\partial_X)$ is defined through its spectral action~\cite{Zworski2012}:
\begin{equation}\label{eq:spectral-def}
\epnu(-i\theta\partial_X)\,e^{i\beta X}\equiv\epnu(\theta\beta)\,e^{i\beta X}.
\end{equation}
For a general function $f(X)=\int\hat{f}(\beta)e^{i\beta X}d\beta$, one has $\epnu(-i\theta\partial_X)f=\int\hat{f}(\beta)\epnu(\theta\beta)e^{i\beta X}d\beta$, i.e., a Fourier multiplier, which in position space is generally a nonlocal (convolution) operator. When $\epnu(s)$ is a polynomial, it reduces to a standard differential operator. The Taylor truncation $\epnu(s)\approx\epnu(0)+i\alpha_\nu s$ converts the Fourier multiplier into a first-order differential operator, $\epnu(-i\theta\partial_X)\approx\epnu(0)+\alpha_\nu\theta\partial_X$, which is local. Substituting into Eq.~\eqref{eq:exact-1D} yields the Hatano--Nelson equation, Eq.~\eqref{eq:HN}.

The eigenstates Eq.~\eqref{eq:chi-OBC} and eigenvalues Eq.~\eqref{eq:E-OBC} of Eq.~\eqref{eq:HN} can also be verified independently through a similarity transformation. Define $S\equiv e^{-\alpha_\nu\theta X/2}$ (a multiplication operator), with $S^{-1}=e^{\alpha_\nu\theta X/2}$. Since $S$ is nonzero at $X=0,\Lx$, it preserves the OBC. Let $\tilde\psi\equiv S\psi$. Using $\partial_X[e^{\alpha_\nu\theta X/2}f]=(\alpha_\nu\theta/2)e^{\alpha_\nu\theta X/2}f+e^{\alpha_\nu\theta X/2}f'$, direct expansion gives
\begin{equation}\label{eq:S-kinetic}
S(-\partial_X^2)S^{-1}=-\partial_X^2-\alpha_\nu\theta\partial_X-\tfrac{\alpha_\nu^2\theta^2}{4}\,,
\end{equation}
together with $S(\alpha_\nu\theta\partial_X)S^{-1}=\alpha_\nu\theta\partial_X+\alpha_\nu^2\theta^2/2$. Combining the two, the linear $\partial_X$ terms cancel exactly, and the transformed equation reads
\begin{equation}\label{eq:H-tilde}
-\tilde\psi''+\!\left(\epnu(0)+\tfrac{\alpha_\nu^2\theta^2}{4}\right)\tilde\psi=E\tilde\psi.
\end{equation}
This is a free particle in a box under OBC, with eigenstates $\tilde\psi_n\propto\sin(n\pi X/\Lx)$; the inverse transformation $\psi_n=S^{-1}\tilde\psi_n=e^{\alpha_\nu\theta X/2}\tilde\psi_n$ recovers Eq.~\eqref{eq:chi-OBC}. Physically, $S$ absorbs the imaginary and real gauge fields simultaneously, mapping the non-Hermitian Hatano--Nelson problem exactly onto a Hermitian particle in a box.

\section{Lattice Hamiltonian and Potential-Specific Derivations}\label{app:lattice-general}

We take a square lattice $(X_m,Y_n)=(ma,na)$ with $m=1,\ldots,N_x$, $n=1,\ldots,N_y$, and lattice constant $a=1$. Under OBC the boundaries lie at $n=0$ and $n=N_y+1$, and the effective box length along $Y$ is $\Ly=(N_y+1)a$. The discretization of the continuum Hamiltonian Eq.~\eqref{eq:Bopp-H} proceeds in two steps: the kinetic energy is discretized by standard central differences, and the potential is discretized order by order through the Bopp expansion Eq.~\eqref{eq:Bopp-expansion}. The standard discretization of the kinetic term $\hat{P}_X^2+\hat{P}_Y^2$ gives the tight-binding Hamiltonian (dropping the constant on-site energy $\varepsilon_0=4t$)
\begin{equation}\label{eq:kinetic-lattice}
H_0=-t\!\sum_{m,n}\bigl(c_{m,n}^\dagger c_{m+1,n}+c_{m,n}^\dagger c_{m,n+1}+\mathrm{h.c.}\bigr),
\end{equation}
where $t\equiv 1/a^2$ is the hopping integral and $c_{m,n}^\dagger$ ($c_{m,n}$) is the creation (annihilation) operator on site $(m,n)$. For a purely imaginary potential $V=iU$ ($U$ real), the $k$th-order term of the Bopp expansion Eq.~\eqref{eq:Bopp-expansion} reads, in the coordinate representation,
\begin{equation}\label{eq:Bopp-k-general}
\frac{1}{k!}V^{(k)}(Y)(\theta\hat{P}_X)^k=\frac{(-1)^k\,i^{k+1}}{k!}\,U^{(k)}(Y)\,\theta^k\,\partial_X^k\,,
\end{equation}
where $V^{(k)}=iU^{(k)}$ supplies a factor of $i$ and $\hat{P}_X^k=(-i\partial_X)^k$ supplies $(-i)^k$; their product $(-1)^k i^{k+1}$ rigorously determines the physical character of each order. For odd $k$, $i^{k+1}$ is real, and a real coefficient multiplying the antisymmetric structure of an odd-order difference produces a real difference between the left and right hopping amplitudes (non-reciprocal hopping), the source of the NCSE. For even $k$, $i^{k+1}$ is purely imaginary, and the imaginary coefficient acting on an even-order (symmetric) difference endows the left and right hoppings with identical imaginary corrections, leaving reciprocity intact. This even-odd rule is a direct algebraic consequence of the purely imaginary potential $V=iU$ combined with the structure of the Bopp expansion.

The lattice discretization of each order is given by the standard central-difference formulas:
\begin{equation}\label{eq:d1}
\begin{aligned}
\partial_X\psi_m&=\frac{\psi_{m+1}-\psi_{m-1}}{2a},\\
\partial_X^2\psi_m&=\frac{\psi_{m+1}-2\psi_m+\psi_{m-1}}{a^2},\\
\partial_X^3\psi_m&=\frac{\psi_{m+2}-2\psi_{m+1}+2\psi_{m-1}-\psi_{m-2}}{2a^3}.
\end{aligned}
\end{equation}
Odd-order differences are antisymmetric (producing non-reciprocal hopping) and even-order ones are symmetric (preserving reciprocity). Defining the correction amplitudes of each order,
\begin{equation}\label{eq:k-lattice}
\delta t_n^{(1)}\equiv\frac{\theta U'}{2a},\;\delta t_n^{(2)}\equiv\frac{-i\theta^2 U''}{2a^2},\;\delta t_n^{(3)}\equiv\frac{\theta^3 U'''}{12a^3},
\end{equation}
the $k=1$ order produces nearest-neighbor non-reciprocal hopping $\pm\delta t_n^{(1)}$; the $k=2$ order produces the symmetric correction $\delta t_n^{(2)}$ and the on-site correction $-2\delta t_n^{(2)}$; the $k=3$ order contributes simultaneously to nearest-neighbor hopping, $\pm 2\delta t_n^{(3)}$, and next-nearest-neighbor hopping, $\mp\delta t_n^{(3)}$. Combining the kinetic term Eq.~\eqref{eq:kinetic-lattice} with the first three Bopp orders, the generic lattice Hamiltonian reads (dropping $\varepsilon_0$)
\begin{equation}\label{eq:lattice-general}
\begin{aligned}
H=H_0&+\!\sum_{m,n}\bigl[\delta J_n^+\,c_{m,n}^\dagger c_{m+1,n}+\delta J_n^-\,c_{m,n}^\dagger c_{m-1,n}\\
&+\left(V(Y_n)-2\delta t_n^{(2)}\right)\,c_{m,n}^\dagger c_{m,n}\bigr],
\end{aligned}
\end{equation}
where the Bopp hopping corrections are $\delta J_n^\pm\equiv\pm(\delta t_n^{(1)}+2\delta t_n^{(3)})+\delta t_n^{(2)}$, and the total effective hopping amplitudes along $X$ are $J_n^\pm\equiv -t+\delta J_n^\pm$. The non-reciprocal part, $J_n^+-J_n^-=2(\delta t_n^{(1)}+2\delta t_n^{(3)})$, comes entirely from the odd Bopp orders; the symmetric part, $(J_n^++J_n^-)/2+t=\delta t_n^{(2)}$, comes from the even orders. At small $\theta$, the $k=1$ nearest-neighbor non-reciprocal hopping $\delta t_n^{(1)}\propto\theta$ dominates the NCSE: for the cubic potential, the ratio of the additional $k=3$ nearest-neighbor contribution $2\delta t_n^{(3)}$ to $\delta t_n^{(1)}$ is $O(\theta^2/N_y^2)$ (see Appendix~\ref{app:NNN-estimate} for the derivation), negligible for $\theta\ll N_y$. For a polynomial potential of degree $N$, $U^{(k)}\equiv 0$ for $k>N$, and the lattice Hamiltonian carries no truncation error (linear potential, $k\leq 1$; quadratic, $k\leq 2$; cubic, $k\leq 3$). This is a level of expansion independent of the Taylor truncation of $\epnu(s)$ in Theorem~\ref{thm:NHSE}: for polynomial potentials the former terminates exactly (even for $\theta\gg 1$) and involves no approximation, while the latter becomes quantitatively exact in the thermodynamic limit $\Lx\to\infty$. In the continuum limit $a\to 0$, the lattice model recovers the continuum theory of Sec.~\ref{sec:quantitative}.

The following subsections present the complete derivations for each specific imaginary potential, tracing the chain from the Bopp expansion to the inverse skin length. The Bopp expansion orders and lattice correction terms for the polynomial potentials are summarized in Table~\ref{tab:polynomial-summary}.
\begin{table}[h]
\centering
\caption{Bopp expansion for polynomial potentials ($\eta_n\equiv Y_n-\Ly/2$).}
\label{tab:polynomial-summary}
\begin{ruledtabular}
\begin{tabular}{lcccc}
Potential & Order & $\delta t_n^{(1)}$ & $\delta t_n^{(2)}$ & $\delta t_n^{(3)}$ \\
\hline
$igY$ & 1 & $g\theta/(2a)$ & 0 & 0 \\
$ig\eta_n^2$ & 2 & $g\theta\eta_n/a$ & $-ig\theta^2/a^2$ & 0 \\
$ig\eta_n^3$ & 3 & $3g\theta\eta_n^2/(2a)$ & $-3ig\theta^2\eta_n/a^2$ & $g\theta^3/(2a^3)$ \\
\end{tabular}
\end{ruledtabular}
\end{table}

\subsection{Quadratic potential: $V=ig(Y-Y_c)^2$}\label{app:quadratic}

Define $\hat{\eta}\equiv\hat{Y}-Y_c$ with $Y_c=\Ly/2$. The potential and its derivatives are $V'=2ig\hat{\eta}$, $V''=2ig$, and $V^{(k)}=0$ for $k\geq 3$. The Bopp expansion terminates exactly at $k=2$:
\begin{equation}\label{eq:quad-bopp-full}
V(\hat{Y}+\theta\hat{P}_X)=ig\hat{\eta}^2+2ig\hat{\eta}\,\theta\hat{P}_X+ig\theta^2\hat{P}_X^2\,,
\end{equation}
i.e., the three terms of the binomial expansion of $ig(\hat{\eta}+\theta\hat{P}_X)^2$. In the coordinate representation, the three terms are: $k=0$, the on-site imaginary potential $ig(Y-Y_c)^2$; $k=1$, the non-reciprocal hopping $2g(Y-Y_c)\theta\partial_X$; $k=2$, the symmetric kinetic correction $-ig\theta^2\partial_X^2$. The full Hamiltonian is
\begin{equation}\label{eq:H-quadratic}
H_{\mathrm{quad}}=-(\partial_X^2+\partial_Y^2)+ig(\hat{\eta}-i\theta\partial_X)^2\,,
\end{equation}
whose expansion is the coordinate-representation form of Eq.~\eqref{eq:quad-bopp-full}. The $k=1$ term $2g\hat{\eta}\theta\partial_X$ contains both $Y$ and $\partial_X$, so the $X$ and $Y$ directions do not decouple, in contrast to the complete separation for the linear potential. At the same time, since $\alpha_\nu=\langle\nu|U'|\nu\rangle=2g\langle\nu|(Y-\Ly/2)|\nu\rangle=0$ (the integrand is an odd function about $\Ly/2$), the weak-potential limit gives $\kappa_\nu=0$.

\subsection{Cubic potential: $V=ig(Y-Y_c)^3$}\label{app:cubic}

The potential and its derivatives are $V'=3ig\hat{\eta}^2$, $V''=6ig\hat{\eta}$, $V'''=6ig$, and $V^{(k)}=0$ for $k\geq 4$. The Bopp expansion terminates exactly at $k=3$:
\begin{equation}\label{eq:cub-bopp-full}
V(\hat{Y}+\theta\hat{P}_X)=ig\hat{\eta}^3+3ig\hat{\eta}^2\theta\hat{P}_X+3ig\hat{\eta}\theta^2\hat{P}_X^2+ig\theta^3\hat{P}_X^3,
\end{equation}
i.e., $ig(\hat{\eta}+\theta\hat{P}_X)^3$. The full Hamiltonian (coordinate representation) is
\begin{equation}\label{eq:H-cubic}
H_{\mathrm{cub}}=-(\partial_X^2+\partial_Y^2)+ig(\hat{\eta}-i\theta\partial_X)^3\,,
\end{equation}
whose expansion is the coordinate-representation form of Eq.~\eqref{eq:cub-bopp-full}. The effects of the various orders in the coordinate representation are: $k=1$, $3g(Y-Y_c)^2\theta\partial_X$, non-reciprocal hopping with strength proportional to $(Y-Y_c)^2$; $k=2$, $-3ig(Y-Y_c)\theta^2\partial_X^2$, a $Y$-dependent symmetric correction; $k=3$, $-g\theta^3\partial_X^3$, uniform next-nearest-neighbor non-reciprocal hopping.

For the non-reciprocity coefficient, $\alpha_\nu=3g\langle\nu|(Y-\Ly/2)^2|\nu\rangle\equiv 3g\sigma_\nu^2$. Setting $u=Y/\Ly$, one has $\sigma_\nu^2=\Ly^2\langle(u-\tfrac{1}{2})^2\rangle$, with the expectation value weighted by $|\phi_\nu^{(0)}|^2=\frac{2}{\Ly}\sin^2(\nu\pi u)$. Expanding $\langle(u-\tfrac{1}{2})^2\rangle=\langle u^2\rangle-\langle u\rangle+\tfrac{1}{4}$ and simplifying the integrals using $\sin^2=(1-\cos)/2$, a direct calculation gives
\begin{equation}
\sigma_\nu^2=\frac{\Ly^2}{12}\!\left(1-\frac{6}{\nu^2\pi^2}\right),
\end{equation}
recovering Eq.~\eqref{eq:sigma-nu} of the main text. The inverse skin length is $\kappa_\nu=\frac{3g\theta}{2}\sigma_\nu^2=\frac{g\theta\Ly^2}{8}(1-6/(\nu^2\pi^2))$.

\subsection{Dominance of nearest-neighbor non-reciprocity at small $\theta$}\label{app:NNN-estimate}

For polynomial potentials of degree $N\geq 3$, the $k=3$ Bopp correction contributes simultaneously to nearest-neighbor and next-nearest-neighbor hopping. This section estimates its magnitude relative to the leading $k=1$ term, providing quantitative justification for neglecting the higher-order non-reciprocal terms at small $\theta$ in Sec.~\ref{sec:numerical-verification} of the main text.

Take the cubic potential ($a=1$, $t=1$) as an example. The additional $k=3$ contribution to nearest-neighbor hopping is $2\delta t_n^{(3)}=g\theta^3$, while the $k=1$ term is $\delta t_n^{(1)}=3g\theta\eta_n^2/2$. Their ratio at a typical site located a quarter of the interval away from the center, $\eta_n\sim N_y/4$, is
\begin{equation}\label{eq:NNN-ratio}
\frac{2\delta t_n^{(3)}}{\delta t_n^{(1)}}\bigg|_{\eta_n\sim N_y/4}\sim\frac{32\theta^2}{3N_y^2}\,.
\end{equation}
This ratio is far smaller than unity for $\theta\ll N_y$; in addition, the hopping corrections themselves must remain small compared with the kinetic hopping: $\delta t_n^{(3)}/t \ll 1$.

The key physical point is that for polynomial potentials the inverse skin length grows linearly, $\kappa\propto g\theta$, without the geometric suppression $|\kappa|\sim g/\omega^3$ of the sine potential (Sec.~\ref{sec:peierls}), so $\theta\sim O(1)$ already produces a pronounced skin effect with $\kappa N_x\gg 1$. For example, for $g=10^{-3}$, $N_y=50$, $\theta=1$: $\kappa_1 N_x\approx 6$ (clearly visible), while $\delta t_n^{(3)}/t\approx 5\times 10^{-4}$ (entirely negligible). Focusing on the nearest-neighbor non-reciprocal hopping in the small-$\theta$ numerical verification is therefore self-consistent.

\subsection{Matrix element for sine potential}\label{app:sine-matrix}

According to Eq.~\eqref{eq:kappa-sine-derivation}, the inverse skin length of the sine potential is determined by the matrix element $I_\nu(\omega)=\frac{2}{\Ly}\int_0^{\Ly}\sin^2(\nu\pi Y/\Ly)\cos(\omega Y/\Ly)\,dY$. Using $\sin^2 x=(1-\cos 2x)/2$, the integral splits into two parts. The first part gives $\sin\omega/\omega$; the second, using product-to-sum identities together with $\sin(2\nu\pi\pm\omega)=\pm\sin\omega$, simplifies to $-\omega\sin\omega/(4\nu^2\pi^2-\omega^2)$ [this is the value of the $\cos$--$\cos$ integral; the minus sign in $\sin^2=(1-\cos)/2$ makes its contribution to $I_\nu$ positive]. Combining,
\begin{equation}\label{eq:Inu-result}
I_\nu(\omega)=\frac{4\nu^2\pi^2\sin\omega}{\omega(4\nu^2\pi^2-\omega^2)},
\end{equation}
valid for all $\nu\geq 1$ and $\omega\neq 2\nu\pi$. The denominator $4\nu^2\pi^2-\omega^2$ vanishes at $\omega=2\nu\pi$, but this singularity is removable: by L'H\^opital's rule, or returning directly to the integral, the factor $\cos((2\nu\pi-\omega)Y/\Ly)$ in the product of $\sin^2(\nu\pi Y/\Ly)$ and $\cos(2\nu\pi Y/\Ly)$ degenerates into a constant as $\omega\to 2\nu\pi$, and the integral yields the finite value $I_\nu(2\nu\pi)=-1/2$. As a function of $\omega$, $\kappa_\nu$ is everywhere finite and continuous, with no physical singularity. For $\omega<2\pi$, the sign of $\kappa_\nu$ for all modes is determined by $\sin\omega$.

\subsection{Cosine potential}\label{app:cosine}

The Peierls mapping of the cosine imaginary potential $V(\hat{y})=ig\cos(\hat{y}/\ell)$ exactly parallels that of the sine potential. Using $ig\cos(\hat{y}/\ell)=\frac{ig}{2}(e^{i\hat{y}/\ell}+e^{-i\hat{y}/\ell})$ (the two terms carry the same sign), under the commensurability condition:
\begin{equation}\label{eq:cosine-lattice}
V(\hat{y})\psi_{m,n}=\frac{ig}{2}\Big(e^{in\phi}\psi_{m+1,n}+e^{-in\phi}\psi_{m-1,n}\Big).
\end{equation}
Although the individual hopping corrections $\frac{ig}{2}e^{\pm in\phi}$ contain imaginary factors, the non-reciprocal part driving the NCSE (the difference between left and right hoppings) remains real: $J_+(n)-J_-(n)=\frac{ig}{2}(e^{in\phi}-e^{-in\phi})=-g\sin(n\phi)\in\mathbb{R}$, a real non-reciprocal hopping just like the $g\cos(n\phi)$ of the sine potential. The difference between the two lies not in the real/imaginary character of the non-reciprocity, but in the different trigonometric dependence on the lattice index $n$.

Applying the perturbative criterion [$V'=-ig\sin(Y/\ell)/\ell$], the matrix element is
\begin{equation}\label{eq:Jnu-result}
J_\nu(\omega)=\frac{4\nu^2\pi^2(1-\cos\omega)}{\omega(4\nu^2\pi^2-\omega^2)},
\end{equation}
and the inverse skin length is
\begin{equation}\label{eq:kappa-cosine}
\kappa_\nu^{(\cos)}=-\frac{2ga_0\nu^2\pi^2(1-\cos\omega)}{\omega(4\nu^2\pi^2-\omega^2)}.
\end{equation}
The symmetry condition $V(\Ly-Z)=V(Z)$ is equivalent to $\omega=2k\pi$, at which $1-\cos(2k\pi)=0$ and $\kappa_\nu^{(\cos)}=0$, consistent with Theorem~\ref{thm:symmetry}.

The core difference between the two cases lies in the trigonometric factors in the numerator, $\sin\omega$ (sine potential) versus $(1-\cos\omega)$ (cosine potential), and manifests itself mainly in the behavior near the symmetry points. For the sine potential, setting $\delta=\omega-\omega_c$ near a symmetry point $\omega_c=(2k+1)\pi$, one has $\sin\omega\approx-\delta$: $\kappa_\nu$ crosses zero linearly, and a sign change of $\delta$ flips the skin direction of all modes, precisely the collective direction-flip effect of Sec.~\ref{sec:peierls}. For the cosine potential, near a symmetry point $\omega_c=2k\pi$ the numerator factor $1-\cos\omega\approx\delta^2/2\geq 0$ is non-negative: it touches zero quadratically without changing sign. The denominator $4\nu^2\pi^2-\omega^2$ changes sign at $\omega=2\nu\pi$, allowing the $\kappa_\nu$ of the $\nu$th band to flip individually at $\omega=2\nu\pi$. This flip, however, is band dependent (different bands flip at different $\omega$), in contrast to the sine potential, where all bands flip collectively at the same $\omega_c$. For the total density $\bar\rho(m)=\sum_\nu\bar\rho_\nu(m)$, when $\omega$ crosses $2\pi$ only the first band undergoes a weak flip ($|\kappa_1|\sim g|\delta|/8$) while the remaining $N_y-1$ bands keep their original direction, so the direction of the total density is unchanged: no collective direction flip exists.

The denominators $4\nu^2\pi^2-\omega^2$ of Eqs.~\eqref{eq:Inu-result} and \eqref{eq:Jnu-result} vanish at $\omega=2\nu\pi$. As in the case of the sine matrix element, these singularities are removable: for the sine potential, $I_\nu(2\nu\pi)=-1/2$; for the cosine potential, $J_\nu(2\nu\pi)=0$ [since the numerator $1-\cos(2\nu\pi)=0$ makes numerator and denominator vanish simultaneously, with a vanishing limit]. Both formulas are everywhere finite and continuous as functions of $\omega$.

\subsection{Parameter constraints for numerical verification}\label{app:param-constraints}

The band-resolved numerical verification of $\kappa_\nu$ in Theorem~\ref{thm:NHSE} relies on several conditions holding simultaneously: the weak-potential condition (Appendix~\ref{app:weak-potential}) ensures the validity of the closed-form expression Eq.~\eqref{eq:kappa-perturbative}; the adiabatic condition (Appendix~\ref{app:adiabatic}) ensures the approximate separability of the two-dimensional eigenstates, and its global version ($j_{\mathrm{safe}}\gtrsim N_x$), combined with the visibility condition, yields for the cubic potential the scaling law $N_x\gtrsim 10(N_y-1)^2$. In addition, the following two lattice-level conditions are required.

\textbf{Skin visibility}: the exponential envelope must be resolvable in a finite system, $|\kappa_\nu|N_x\gtrsim 1$. This condition sets lower bounds on the system parameters. For the linear potential, the $X$ and $Y$ directions decouple completely and $\kappa=g\theta/2$ holds exactly for arbitrary parameters, so only this condition is needed. For the quadratic potential, $\kappa_\nu\equiv 0$ (guaranteed by Theorem~\ref{thm:symmetry}), and no $\kappa$ extraction is required.

\textbf{Bopp-expansion convergence} (polynomial potentials only): the higher-order Bopp corrections in the lattice Hamiltonian must remain small compared with the kinetic hopping. For the cubic potential, both the nearest-neighbor condition $3g\theta\eta_{\max}^2/2\ll t$ and the next-nearest-neighbor condition $g\theta^3/(2a^3)\ll t$ are required; the latter is often the most stringent constraint at large $\theta$. The Peierls mappings of the sine and cosine potentials are exact, and this condition is satisfied automatically.

Table~\ref{tab:param-constraints} summarizes the visibility and Bopp-convergence constraints for each potential. The essential difference is that for polynomial potentials $\max|U|$ grows as a power of $\Ly$, so the parameter window shrinks rapidly at large $N_y$ (for the cubic potential, $N_y$ should not exceed ${\sim}15$), whereas for the sine and cosine potentials $|U|\leq g$ is globally bounded and the Peierls mapping is exact, leaving the widest freedom in parameter choice.

\begin{table}[h]
\centering
\caption{Lattice-level conditions for the numerical verification of each imaginary potential. ``---'' denotes automatically satisfied or not applicable. The weak-potential and adiabatic conditions are given in Appendixes~\ref{app:weak-potential} and \ref{app:adiabatic}. $\eta_{\max}=(N_y-1)/2$, $a=1$.}
\label{tab:param-constraints}
\begin{ruledtabular}
\begin{tabular}{lcc}
Imaginary potential & Visibility & Bopp convergence \\
\hline
$igY$ & $g\theta N_x/2\gtrsim 1$ & --- \\
$ig(Y{-}Y_c)^2$ & --- & --- \\
$ig(Y{-}Y_c)^3$ & $g\theta\Ly^2 N_x/8\gtrsim 1$ & $g\theta^3\ll 2$ \\
$ig\sin(Y/\ell)$ & Eq.~\eqref{eq:kappa-sine} & --- \\
$ig\cos(Y/\ell)$ & Eq.~\eqref{eq:kappa-cosine} & --- \\
\end{tabular}
\end{ruledtabular}
\end{table}

\section{Physical Realization: Detailed Analysis}\label{app:realization}

The mechanism of this work requires three ingredients to coexist: (R1) coordinate noncommutativity $[\hat{x},\hat{y}]=i\theta$; (R2) standard second-order kinetic energy $\hat{P}_X^2+\hat{P}_Y^2$; and (R3) an imaginary potential $V(\hat{y})=iU(\hat{y})$. Below we analyze in detail the structural obstructions faced by the two most direct candidate platforms (synthetic-dimension schemes are discussed in Sec.~\ref{sec:discussion} of the main text).

\subsection{Lowest-Landau-level guiding center}\label{app:LLL}

In a strong magnetic field $\bm{B}=B\hat{z}$ (symmetric gauge), the guiding-center coordinates $\hat{X}_c=\hat{x}-\hat{\pi}_y/(m\omega_c)$, $\hat{Y}_c=\hat{y}+\hat{\pi}_x/(m\omega_c)$ [$\hat{\pi}_i=\hat{p}_i+eA_i$ are the dynamical momenta and $\omega_c=eB/m$ is the cyclotron frequency] satisfy $[\hat{X}_c,\hat{Y}_c]=i\ell_B^2$~\cite{Ezawa2013}. The Landau levels are $E_n=\hbar\omega_c(n+1/2)$. Defining the LLL projector $\mathcal{P}_0\equiv|n=0\rangle\langle n=0|$ (tracing out the cyclotron degree of freedom), and using the Landau Hamiltonian $H_0\equiv\hbar\omega_c(\hat{a}^\dagger\hat{a}+1/2)$ ($\hat{a}$, $\hat{a}^\dagger$ are the ladder operators of the cyclotron motion) together with $\hat{a}|0\rangle=0$, one finds $\mathcal{P}_0 H_0\mathcal{P}_0=(\hbar\omega_c/2)\mathcal{P}_0$, so the effective Hamiltonian is
\begin{equation}
H_{\mathrm{eff}}=\frac{\hbar\omega_c}{2}\mathcal{P}_0+\mathcal{P}_0 V(\hat{y})\mathcal{P}_0,
\end{equation}
in which the kinetic energy degenerates into a constant independent of the guiding-center position; there is no second-order dispersion. The NCSE structure of this work, $H_X=\hat{P}_X^2+ig\theta\hat{P}_X$, relies on the competition between the second-order kinetic energy and the first-order imaginary gauge field; in the LLL this competition is absent, and ingredient (R2) is missing.

Even without the LLL projection, i.e., retaining all Landau levels, Landau quantization still profoundly modifies the kinetic structure, and recent studies have shown that magnetic fields generically suppress the NHSE: Lu, Zhang, and Franz~\cite{Lu2021} proved rigorously, in a two-dimensional Hatano--Nelson lattice model, that a magnetic field pushes the skin states from the boundary back into the bulk, drastically shrinking the ``skin-topological area,'' the physical origin being precisely the competition between the bulk-localized states produced by Landau quantization and the skin boundary states. Subsequent work extended this effect to pseudomagnetic fields~\cite{Teo2024} and to more general Harper--Hofstadter models~\cite{Longhi2025}. Notably, Li \emph{et al.}~\cite{Li2023magnetic} found that while a magnetic field suppresses the first-order NHSE, it can enhance the second-order skin effect, indicating that the interplay between magnetic fields and non-Hermitian skin physics is richer than simple ``suppression.''

\subsection{Phase-space reinterpretation}

Standard one-dimensional quantum mechanics, $[\hat{x},\hat{p}]=i\hbar$, is algebraically isomorphic to $[\hat{x},\hat{y}]=i\theta$. Consider $H=p^2/(2m)+igx$. In the momentum representation, $\hat{x}=i\hbar\partial_p$ and the Hamiltonian becomes $H=p^2/(2m)-g\hbar\partial_p$. Superficially, the first-order term $-g\hbar\partial_p$ resembles an imaginary gauge field, but no similarity transformation can remove it. Let $\chi(p)=e^{\lambda p}\phi(p)$ and substitute into $H\chi=E\chi$; since $p^2/(2m)$ is a multiplication operator in the momentum representation, $e^{\lambda p}$ cancels through it directly without generating any cross derivative terms:
\begin{equation}
\frac{p^2}{2m}\phi-g\hbar(\lambda\phi+\phi')=E\phi.
\end{equation}
The first-derivative term $-g\hbar\phi'$ cannot be eliminated, while $\lambda$ contributes only an energy shift $-g\hbar\lambda$, for any choice of $\lambda$. Contrast this with $H_X=-\partial_X^2+g\theta\partial_X$ of this work: both terms are \emph{differential} operators, and the exponential transformation $\psi=e^{\kappa X}\phi$ renders the total coefficient of $\partial_X$ equal to $-2\kappa+g\theta$, which can be set to zero ($\kappa=g\theta/2$), eliminating the first-order term exactly. The root cause is a mismatch of phase-space dimensions. The phase space of standard quantum mechanics is the two-dimensional $(\hat{x},\hat{p})$: the imaginary potential $igx$ becomes, after Fourier transformation, a differential operator $-g\hbar\partial_p$ in momentum space, but it cannot mathematically mix with the kinetic term $p^2/(2m)$, which is a multiplication operator: one is multiplicative, the other differential. The noncommutative space of this work has a four-dimensional phase space $(\hat{x},\hat{y},\hat{p}_x,\hat{p}_y)$, and the Bopp shift $\hat{y}=\hat{Y}+\theta\hat{P}_X$ is not a Fourier transformation but a representation of $\hat{y}$ as a linear combination of a coordinate and a momentum~\cite{Douglas2001,Szabo2003}. The imaginary-potential term $ig\theta\hat{P}_X$ and the kinetic term $\hat{P}_X^2$ act on the same variable $X$ and are both differential operators; a similarity transformation can therefore connect the two, which is the structural prerequisite for the emergence of the NHSE.


%

\end{document}